\newcommand*\Let[2]{\State #1 $\gets$ #2}
\newcommand*\inlineif[3]{\State \textbf{if\ } #1 \textbf{\ then\ } #2 \textbf{\ else\ }#3\textbf{\ end if\ }}
\newcommand*\linif[2]{\State \textbf{if\ } #1 \textbf{\ then\ } #2 \textbf{\ end if\ }}
\algrenewcommand\alglinenumber[1]{{\sf\footnotesize#1}}
\algrenewcommand\algorithmicrequire{\textbf{Precondition:}}
\algrenewcommand\algorithmicensure{\textbf{Postcondition:}}
\newcommand{\Satisfiable}{\textsc{Satisfiable}}
\newcommand{\Checkconsistency}{\textsc{Checkbc}}
\newcommand{\Booleancase}{\textsc{Bool}}
\newcommand{\Modalcase}{\textsc{Modal}}
\newcommand{\emptymodel}{\ensuremath{\mathfrak{E}_0}}
\newcommand{\pmodel}[1]{\ensuremath{\mathfrak{#1}}}
\newcommand{\False}{\textbf{F}}
\newcommand{\True}{\textbf{T}}
\newcommand{\algReturn}{\State\textbf{Return\ }}
\newcommand{\M}{\ensuremath{\mathcal{M}}}
\newcommand{\rel}[1]{{\rightarrow_{#1}}}
\newcommand{\lr}[1]{\langle #1 \rangle}
\newcommand{\llrr}[1]{\{ #1 \}}
\newcommand{\mclass}[1]{\textbf{\textsf{#1}}}
\newcommand{\PSPACE}{\textbf{PSPACE}}
\newtheorem{theorem}{Theorem}
\newtheorem{definition}[theorem]{Definition}
\newtheorem{proposition}[theorem]{Proposition}
\newtheorem{lemma}[theorem]{Lemma}
\newcommand{\ELKv}{\textbf{ELKv}}
\newcommand{\LKv}{\textbf{LKv}}
\newcommand{\ELKvr}{\ensuremath{\ELKv^r}}
\newcommand{\LKvr}{\ensuremath{\LKv^r}}
\newcommand{\BP}{\textbf{P}}
\newcommand{\Ag}{\textbf{I}}
\newcommand{\SLKvr}{\ensuremath{\mathbb{LKV}^r}}
\newcommand{\mc}[1]{\mathcal{#1}}
\newcommand{\tr}[1]{\text{#1}}
\newcommand{\TAUT}{{\texttt{TAUT}}}
\newcommand{\RE}{{\texttt{RE}}}
\newcommand{\MP}{{\texttt{MP}}}
\newcommand{\DISTNSV}{\ensuremath{\texttt{DISTV}}}
\newcommand{\NSVBOT}{\ensuremath{\texttt{V}\bot}}
\newcommand{\NSVOR}{\ensuremath{\texttt{V}\lor}}
\newcommand{\NEC}{\ensuremath{\texttt{NEC}}}
\newcommand{\D}{\textbf{D}}
\newcommand{\N}{\ensuremath{\mathbb{N}}}
\newcommand{\lra}{\ensuremath{\leftrightarrow}}
\newcommand{\ncs}{\Box}
\newcommand{\psb}{\Diamond}
\newcommand{\nsv}{\nabla}
\newcommand{\imp}{\ensuremath{\rightarrow}}
\newcommand{\Lra}{\ensuremath{\Leftrightarrow}}
\newcommand{\ttt}[1]{\text{\texttt{#1}}}
\begin{document}

\title{Axiomatization and complexity of
  modal logic with knowing-what operator on model class \mclass{K}
  \footnote{The author is very grateful
    for Prof. Yanjing Wang's effort
    to provide much needed comments under his busy schedule.
    As this is still a draft,
    many things in the tableau part are yet unexplained,
    and the author begs the pardon of all readers.}}

\author{Yifeng Ding \\
  \small\textit{Group in Logic and the Methodology of Science, UC Berkeley}\\
  \small\texttt{yf.ding@berkeley.edu}\\ }
\date{}
\maketitle

\newpage

\begin{abstract}
    Standard epistemic logic studies propositional knowledge,
    yet many other types of knowledge such as
    ``knowing whether'', ``knowing what'', and ``knowing how''
    are frequently and widely used in everyday life as well as academic fields.
    An axiomatization of
    the epistemic logic with both
    regular ``knowing that'' operator and ``conditionally knowing what'' operator
    is recently given in
    [Yanjing Wang and Jie Fan.
    Conditionally knowing what.
    in \textit{Proceedings of AiML14}, April 2014.].
    Then the decidability and complexity of this logic command our study.
    In this paper,
    we give
    an axiomatization and a tableau
    for the modal logic with the same operators on arbitrary Kripke models.
    Given the tableau,
    the complexity of the satisfiability problem of this logic is \PSPACE-complete.
    \medskip\\
    \textbf{Keywords:} Knowing what, modal logic, tableau, \PSPACE-complete
\end{abstract}
\newpage

\section{Introduction}

Standard epistemic logic studies the ``knowing that'' operator $K_i$ where
$K_i\phi$ means agent $i$ knows \textit{that} $\phi$ is true.
While this perspective fixed our focus on \textit{propositional knowledge},
its simplicity also facilitated the studies, extensions, and applications of it.
Recent decades witnessed the prosperity of numerous logics
with standard knowing-that operator or
similar propositional operators in fields like philosophy, computer science, and game theory.
However,
there are also other interesting knowledge expressions
used in our everyday life, like ``knowing whether'', ``knowing what'', and ``knowing how'',
which have raised many interesting questions in linguistics and philosophy,
but received less attention in logic.

Among these ways of expressing knowledge,
``knowing what'' is particularly suitable for
the beginning of our logical study of the myriad of non-standard knowledge operators,
for it is a richer topic compared to ``knowing whether'',
less contentious than ``knowing how'' philosophically,
interesting in its own logical and mathematical properties,
and readily applicable in other fields like cryptography.
For example, sentences like
``he knows that she knows her private key,
but he do not know what exactly his private key is.''
are typical in security settings.
With the propositional knowledge operator $K$ alone,
we may have a formula $K_iK_jp\land\lnot K_i p$ to express this.
But by axiom $\texttt{T}$ in standard epistemic logic,
this formula is not consistent.
Introducing something new is obviously needed,
and several attempts was made recently,
such as \cite{HP03:adversaries, DeciSP} in security settings.

In fact,
in his grounding work of the epistemic logic \cite{Hintikka:kab},
Hintikka has already briefly discussed ``knowing who'' in ch.6.3,
an operator with evident similarity with ``knowing what'',
in terms of first-order modal logic.
In \cite{Plaza89:lopc},
a seminal work that is hitherto mostly referred to by the studies of Public Announcement Logic,
Plaza also proposed a ``knowing what'' operator $Kv$,
of course in the context of Public Announcement Logic.
This leaves us a logic with both ``knowing what'' and public announcement.

Technically,
$Kv$ operator packs an existential quantifier with a modality together,
and the resulting logic is a small fragment of first-order modal logic,
which requires new techniques to handle.
To deal with the public announcement part,
we need to change our perspective and pack announcement into the ``knowing what'' operator
to make it a conditional one.
Thus until in \cite{WF14, WF13} by Wang and Fan did
we see a complete axiomatization of the logic with both the ``knowing what'' operator
and the model relativization operator, i.e., \ELKvr.
Because of the potential application of this logic,
such as in the field of computer science and AI as argued by McCarthy in \cite{McCarthy79},
the decidability and complexity of this logic become important.
In \cite{Xiong14}, Xiong has shown that \ELKvr is decidable for its small model property.
As for complexity, this paper serves as a preliminary step.

In this paper,
we show that the axiomatization of Wang and Fan without the characteristic S5 axioms
is also complete w.r.t. the logic on the class of arbitrary models
(call it \LKvr, that is, \ELKvr\ without the initial ``Epistemic'').
We simplifies the proof of completeness in \cite{WF14} significantly.
With the constraint of reflexivity,
there are some interactions between agents,
thus the beautiful property of the conditional part of knowing what operator in one agent is
obscured and complicated.
Without such constraint, we can work on the knowledge of an agent more easily and abstractly.

Moreover,
we show that the complexity of the satisfiability of the logic is \PSPACE-complete,
which is no more complex than most normal modal logics and in particular $K$.
This is by way of a tableau.
Normally a tableau means two things:
first, to test the satisfiability of a formula,
only its subformula counts,
and thus we can do trials on each of those subformulas by setting it true or false;
second, we have a canonical or minimal way to deal with the modal operators,
much like the spirit of Sahlqvist's minimal assignment method,
such that if this minimal way fails,
all possible ways fail necessarily.
As for our logic on the model class \mclass{K},
the first property is also true,
and for the second property,
there is not ``a'' canonical way but an array of them,
enumerable within \PSPACE.

The rest of this paper is structured as such:
we first give the syntax and semantics of \LKvr and its proof system \SLKvr in section 2.
Section 3 presents the completeness results and Section 4 the complexity.
We then conclude this paper with future work in Section 5.

\section{Preliminaries}
We follow the notations proposed in \cite{WF14}\ .
However, since we are now working on arbitrary Kripke models,
it is no longer appropriate to use $K$ as the modal operator.
So we now return to the box and diamond notation.

Given a countably infinite set of proposition letters $\BP$,
a countably infinite set of agent names $\Ag$,
and a countably infinite set of (non-rigid) constant symbols $\D$,
the language of \LKvr\ is defined as follows:
$$\phi ::= \top\mid p\mid \lnot\phi\mid (\phi\land\phi)\mid \ncs_i\phi\mid \nsv_i (\phi, d) $$
where $ p\in\BP, i\in\Ag$, and $d\in\D $.
Our new operator $\nabla_i(\phi,d)$ here says that,
in all possible cases where $\phi$ is true,
the value of $d$ is all the same.
For example,
the sentence
``I know your password if it is a four-digit number''
can be expressed as
$\nabla(\textbf{four-digit\_number\_password}, \textbf{password})$.
As usual, we define $\bot, (\phi\lor\psi), (\phi\imp\psi)$, and $ \psb_i\phi$
as the abbreviations of, respectively,
$\lnot\top, \lnot(\lnot\phi\land\lnot\psi), \lnot(\phi\land\lnot\psi)$, and
$\lnot\ncs_i\lnot\phi$.
Parentheses will be omitted unless confusion arises.

For future convenience,
write $Sub(\phi)$ for the set of subformulas of $\phi$,
where for $\nabla_i(\phi,d)$,
its subformulas are itself plus all the subformulas of $\phi$.
Then define $Sub^+(\phi) = \llrr{\lnot\phi\mid\phi\in Sub(\phi)}\cup Sub(\phi)$.
Let $D(\phi)$ be the set of the value names that occur in $\phi$.
At the same time, we need $depth(\phi)$ denoting the modal depth of $\phi$.
For the new operator $\nabla_i$,
we define $depth(\nabla_i(\phi,d)) = depth(\phi)+1$.
Further, for any finite set of formulas $X$:
    $$\begin{array}{llllll}
        Sub(X) & = & \bigcup_{\phi\in X} Sub(\phi) & \lnot X & = & \llrr{\lnot\phi\mid\phi\in X} \\
        Sub^+(X) & = & \bigcup_{\phi\in X} Sub^+(\phi) & X\backslash\Box_i & = & \llrr{\phi\mid\Box_i\phi\in X} \\
        D(X) & = & \bigcup_{\phi\in X} D(\phi) & \phi_X & = & \bigwedge_{\phi\in X}\phi \\
        depth(X) & = & \max\llrr{depth(\phi)\mid\phi\in X} &\Box_i X & = & \llrr{\Box_i\phi\mid\phi\in X}\\
    \end{array}$$

    To interpret \LKvr\ ,
    we need to extend common Kripke models to
    incorporate the assignment of the names in \D,
    and this can also be seen as a first-order Kripke model with a constant domain.
    So a model of \LKvr\ is defined as
    $\M  = \lr{S, O, \llrr{\rel{i}\mid i\in\Ag}, V, V_{\D}}$,
    in which $S$ is a non-empty set of possible worlds,
    $O$ is a non-empty set of values,
    $\rel{i}$ is a binary relation on $S$,
    $V$ is a function assigning to each proposition letter $p\in\BP$
    a set of possible worlds $V(p)\subset S$ where $p$ is true,
    and $V_{\D}$ a function from $\D \times S$ to $O$
    so that each value name $d\in \D$ at each possible world $s$ is assigned a value $V_{\D}(d,s)$.
    Let $\mclass{K}$ denote the class of all models defined above.
    Now the semantics:
    $$ \begin{array}{|lll|}
        \hline
        \mc{M},s \vDash \top  &      &   \textrm{always holds}\\
        \mc{M},s \vDash p     & \Lra &   s \in V(p) \\
        \mc{M},s \vDash \neg \phi &\Leftrightarrow& \mc{M},s\nvDash \phi \\
        \mc{M},s\vDash \phi\land \psi &\Leftrightarrow&\mc{M},s\vDash \phi \textrm{ and } \mc{M},s\vDash \psi \\
        \mc{M},s\vDash \ncs_i \phi &\Leftrightarrow&\text{for all } t \text{ such that }s\rel{i}t: \mc{M},t\vDash \phi  \\
        \M,s    \vDash \nsv_i(\phi,d)&\Leftrightarrow&\text{for any}~ t_1,t_2\in S \text{ such that }s\rel{i}t_1\text{ and }s\rel{i}t_2:\\&&\text{if } \M, t_1\vDash\phi \text{ and }\M,t_2\vDash\phi,\text{ then } V_\D(d, t_1)=V_\D(d,t_2)\\
        \hline
    \end{array}$$
    Intuitively,
    $\nabla_i(\phi,d)$ is true at $s$
    if and only if
    in all $i-$accessible worlds where $\phi$ is true,
    $d$ is assigned a uniform value.
    Conversely, for $\nabla_i(\phi,d)$ to be false,
    there must be two $i-$accessible $\phi-$worlds
    that disagree on the value of $d$.
    From the perspective of first-order modal logic,
    $\nabla_i(\phi,d)$ can be seen as
    $\exists x\Box_i(\phi\rightarrow d = x)$
    where $x$ is a rigid variable
    and $c$ a non-rigid one.
    Thus a $\nabla$ is actually a package
    consists of a quantifier, a modality, an implication, and an equality.

    As for the derivation system,
    it is enough to just exclude axioms particular to S5 from the system proposed in \cite{WF14}:\\
    \begin{minipage}{0.50\textwidth}
        \begin{center}
        \begin{tabular}{lc}
            \multicolumn{2}{l}{\textbf{System }\SLKvr}\\
            \multicolumn{2}{l}{Axiom Schemas}\\
            \texttt{TAUT} & \tr{ all the instances of tautologies}\\
            \texttt{K} &$ \ncs_i (\phi\to\psi)\to (\ncs_i \phi\to \ncs_i \psi$)\\
            \DISTNSV& $\ncs_i (\phi\to\psi)\to (\nsv_i(\psi,d)\to \nsv_i(\phi,d))$\\
            \NSVBOT& $\nsv_i(\bot,d)$\\
            \NSVOR& $\psb_i(\phi\wedge\psi)\wedge \nsv_i(\phi,d)\wedge \nsv_i(\psi,d)\to \nsv_i(\phi\vee\psi,d)$\\
        \end{tabular}
        \end{center}
    \end{minipage}
    \hfill
    \begin{minipage}{0.30\textwidth}
        \begin{center}
        \begin{tabular}{lc}
        \multicolumn{2}{l}{\ \ \ }\\
        \multicolumn{2}{l}{Rules}\\
        \MP& $\dfrac{\phi,\phi\to\psi}{\psi}$\\
        \NEC &$\dfrac{\phi}{\ncs_i \phi}$\\
        \RE &$\dfrac{\psi\lra\chi}{\phi\lra \phi[\psi\slash\chi]}$
        \end{tabular}
        \end{center}
    \end{minipage}

\section{Completeness}

Our proof of the completeness of \SLKvr proceeds in the standard Henkin way:
use maximal consistent sets as the basis of the canonical model,
link the canonical relations properly so that an existence lemma can be proven,
use the existence lemma to prove a truth lemma and then completeness follows immediately.
However,
as our $\nabla$ operator packs many things in it,
simply a maximal consistent set does not give us enough information
to pin down every possibilities.
Thus, we need to saturate these maximal consistent sets consistently.
Specifically, since $\nabla_i(\phi,d)$ is actually $\exists x\Box(\phi\rightarrow d = x)$,
its subformulas
$\Box(\phi\rightarrow d = x)$ and $d = x$ need their counterpart in the canonical model.
Now we give the definition:

    \begin{definition}\label{def.canonicalmodel}
        Denote the set of all maximal consistent sets w.r.t. \SLKvr as \textbf{MCS} and the set of natural number as \N. The canonical model $\M^c = \lr{S^c, \N, \llrr{\rel{i}\mid i\in \Ag}, V^c, V_{\D}^c}$ where\footnote{A countable set $\N$ is already big enough as the constant domain of objects. See footnote in \cite{WF14}. Also note that following clause (2) is slightly different from clause (ii) in \cite{WF14}.}:
        \begin{itemize}
            \item $S^c$ consists of all the triples $\lr{\Gamma, f, g} \in \text{\textbf{MCS}}\times \N^{\D}\times (\N\cup\llrr{*})^{\Ag\times \LKvr \times \D}$ that satisfy the following two conditions for any $ i\in \Ag, \phi, \psi\in\LKvr, d\in\D $:
            \begin{description}
                \item[(1)] $g(i, \phi, d) \not= *$ iff $\nsv_i(\phi, d)\land\psb_i\phi\in\Gamma$;
                \item[(2)] $g(i,\phi,d) \not= *$ and $g(i,\psi,d) \not= *$ imply: $g(i,\phi,d) = g(i, \psi, d)$ iff $\nsv_i(\phi\lor\psi, d)\in\Gamma$
            \end{description}
                for any $s\in S^c$, we use $\Gamma_s, f_s, g_s$ to denote the three components of $s$ and we simplify $\phi\in\Gamma_s$ as $\phi\in s$.
            \item For $s, t\in S^c$, $s\rel{i}t$ iff the following two conditions are satisfied:
            \begin{description}
                \item[(3)] $\llrr{\phi\mid\ncs_i\phi\in s} \subseteq t$.
                \item[(4)] $\nsv_i(\phi,d)\in s$ and $\phi\in t$ imply $f_t(d) = g_s(i,\phi,d)$.
            \end{description}
        \end{itemize}
    \end{definition}

    Here, $g$ is the counterpart of $\Box(\phi\rightarrow d = x)$ and $f$ the counterpart of $d = x$. To be explicit about their meaning, $g(i,\phi,d)$ gives the value of $d$ in all the $\phi-$worlds accessible by $i$, and $f(d)$ gives the value of $d$ directly. The star symbol obviously means that if there are no $\phi-$worlds accessible from $i$, $g(i,\phi,d)$ should reflect this fact by a value not in $\mathfrak{N}$.

    Given this canonical model, existence lemma is then our aim. In ordinary model logic, it is enough to use Lindenbaum lemma to extend $\Gamma\backslash\Box_i$ to build a $i-$successor of $\Gamma$. However, as our canonical model requires more information, or a saturation, we must show that such a saturation is possible i.e. is consistent with what we already have. The following proposition states this technically:

    \begin{proposition}\label{prop.construction}
        Given a possible world $s\in S^c$, an agent $i\in \Ag$, a maximal consistent set $\Gamma$ such that $\llrr{\phi\mid\ncs_i\phi\in s} \subseteq \Gamma$ and a natural number $x\in\N$, we can construct $t = \lr{\Gamma, f, g}$ using $x$ such that $t \in S^c$ and $s\rel{i}t$.
    \end{proposition}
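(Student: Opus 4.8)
The plan is to build the two functions $f$ and $g$ that, together with the given $\Gamma$, form a triple in $S^c$ satisfying clauses (1)--(4). The function $f$ is the easy part: for $d\in D$, if $d\in D(\Gamma)$ we must choose $f(d)$ to satisfy clause (4), namely $f(d)=g_s(i,\phi,d)$ for every $\phi$ with $\nabla_i(\phi,d)\in s$ and $\phi\in\Gamma$; for $d$ not constrained this way we may set $f(d)=x$ (or anything). So the first order of business is to show this demand on $f(d)$ is well-defined, i.e. that all the values $g_s(i,\phi,d)$ arising from admissible $\phi$ coincide. The main obstacle is exactly this consistency check, plus the parallel one for $g$.

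For $g$, I would proceed as follows. For a triple $(i,\phi,d)$, clause (1) forces $g(i,\phi,d)\neq *$ iff $\nabla_i(\phi,d)\wedge\psb_i\phi\in\Gamma$, so the only freedom is in choosing the \emph{natural number} $g(i,\phi,d)$ when that disjunction lies in $\Gamma$, subject to clause (2): $g(i,\phi,d)=g(i,\psi,d)$ iff $\nabla_i(\phi\vee\psi,d)\in\Gamma$. I would first verify, using the axioms $\DISTNSV$, $\NSVBOT$, $\NSVOR$ together with $\K$, $\NEC$, $\RE$, that for each fixed $d$ the relation ``$\nabla_i(\phi\vee\psi,d)\in\Gamma$'' restricted to the $\phi$ with $g$-value $\neq *$ is an equivalence relation --- reflexivity from $\nabla_i(\phi,d)\in\Gamma$ plus $\nabla_i(\phi\vee\phi,d)\leftrightarrow\nabla_i(\phi,d)$ via $\RE$; symmetry from $\phi\vee\psi\leftrightarrow\psi\vee\phi$; transitivity the genuine work, using $\NSVOR$ and $\DISTNSV$ to pass from $\nabla_i(\phi\vee\psi,d)$ and $\nabla_i(\psi\vee\chi,d)$ (with all three $\psb_i$-satisfiable) to $\nabla_i(\phi\vee\chi,d)$. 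Then I assign to each equivalence class a distinct natural number, never using $x$, thereby defining $g(i,\phi,d)$ on the ``$\neq *$'' triples and setting it to $*$ elsewhere; clauses (1) and (2) hold by construction.

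Now I must reconcile $f$ with $g$ to get clause (4). The key sublemma is: if $\nabla_i(\phi,d)\in s$, $\phi\in\Gamma$, $\nabla_i(\psi,d)\in s$, $\psi\in\Gamma$, then $g_s(i,\phi,d)=g_s(i,\psi,d)$ --- because $\phi\in\Gamma$ and $\psi\in\Gamma$ give $\psb_i(\phi\wedge\psi)\in s$ (as $\ncs_i\lnot(\phi\wedge\psi)\in s$ would, by clause (3), force $\lnot(\phi\wedge\psi)\in\Gamma$), hence $\psb_i\phi,\psb_i\psi\in s$, so $g_s(i,\phi,d)\neq*\neq g_s(i,\psi,d)$ by clause (1) for $s$, and then $\NSVOR$ applied inside $s$ yields $\nabla_i(\phi\vee\psi,d)\in s$, so clause (2) for $s$ gives $g_s(i,\phi,d)=g_s(i,\psi,d)$. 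Thus $f(d):=g_s(i,\phi,d)$ is well-defined; for $d$ with no such $\phi$, put $f(d)=x$. This makes clause (4) hold, clause (3) holds by hypothesis on $\Gamma$, and the triple $t=\langle\Gamma,f,g\rangle$ lies in $S^c$ with $s\rel{i}t$. The hard part, as noted, is the transitivity argument for the equivalence relation underlying $g$ and the careful bookkeeping ensuring the $f$-constraints inherited from $s$ never contradict one another; everything else is routine modal reasoning from the listed axioms.
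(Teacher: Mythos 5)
Your proposal is correct and follows essentially the same route as the paper: define $f(d)$ as $g_s(i,\phi,d)$ for any witnessing $\phi\in\Gamma$ with $\nsv_i(\phi,d)\in s$ (well-defined via $\psb_i(\phi\land\psi)\in s$, \NSVOR, and clause (2) at $s$), and define $g$ by showing that $\nsv_i(\phi\lor\psi,d)\in\Gamma$ induces an equivalence relation on $\llrr{\phi\mid\nsv_i(\phi,d)\land\psb_i\phi\in\Gamma}$ and injecting the (countably many) classes into $\N$. The only substantive step you leave as a sketch is the transitivity derivation, for which you correctly identify \NSVOR\ and \DISTNSV\ as the needed axioms, matching the paper's explicit derivation.
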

    \begin{proof}
        Note that the only thing we need to do is to construct appropriate $f$ and $g$ so that $t = \lr{\Gamma, f, g}$ satisfies the requirements (1), (2) and (4) stated in definition \ref{def.canonicalmodel}, since (3) is already satisfied. We first construct $f$ (which is easier) and then $g$.

        For any $d\in \D$:
        $$ f(d) = \left\{
            \begin{array}{ll}
                g_s(i, \phi, d) & \text{there exists a\ } \phi\in\LKvr: \phi\in\Gamma\text{\ and\ }\nsv_i(\phi,d)\in s\\
                x & \text{otherwise}
            \end{array}
        \right.
        $$
        Obviously, if this $f$ is well-defined, then (4) in definition \ref{def.canonicalmodel} will be satisfied. Now we claim that this definition is indeed well-defined, that is, for any $\phi, \psi\in\LKvr$ such that $\phi\in\Gamma$, $\nsv_i(\phi,d)\in s$, $\psi\in\Gamma$ and $\nsv_i(\psi,d)\in s$, we have $g_s(i, \phi, d) = g_s(i, \psi, d)$.

        First, if $\phi\in\Gamma$ and $\psi\in\Gamma$, then $\psb_i(\phi\land\psi)\in s$. Suppose not, since $s$ is maximal, $\ncs_i(\lnot\phi\lor\lnot\psi)\in s$. Then $\lnot\phi\lor\lnot\psi \in\Gamma$. Again, since $\Gamma$ is maximal, either $\lnot\phi\in\Gamma$ or $\lnot\psi\in\Gamma$. But either way, $\Gamma$ will be inconsistent.

        Now $\psb_i(\phi\land\psi), \nsv_i(\phi,d), \nsv_i(\psi,d)$ are all in $s$. By axiom \NSVOR\ and the maximality of $s$, $\nsv_i(\phi\lor\psi, d)\in s$. According to clause (2) of definition \ref{def.canonicalmodel}, $g_s(i,\phi,d) = g_s(i,\psi,d)$ and this concludes the proof of the well-definedness of $f$.

        The construction of $g$ is more involved because of the clause (2). For any $i\in\Ag$ and any $d\in\D$, first we construct a partition on the set $G(i,d) = \llrr{\phi\in \LKvr\mid\nsv_i(\phi,d)\land\psb_i\phi\in\Gamma}$. Note that this set is exactly the collection of formulas that we need to give a non-star value through $g(i,\phi,d)$, and the clause (2) is effective only on this set. For any two $\phi,\psi\in G(i,d)$, let $\phi\sim_{i,d}\psi$ iff $\nsv_i(\phi\lor\psi,d)\in\Gamma$. Now we claim that $\sim_{i,d}$ is an equivalence relation:
        \begin{description}
            \item[Reflexivity]
                For $\phi\in G(i,d)$, by definition of $G(i,d)$, $\nsv_i(\phi,d)\in \Gamma$. By $\texttt{TAUT}$, $\phi\lra\phi\lor\phi$. By $\RE$ and the maximality of $\Gamma$, $\nsv_i(\phi\lor\phi,d)\in\Gamma$. So $\phi\sim_{i,d}\phi$.
            \item[Symmetry]
                For $\phi, \psi\in G(i,d)$, if $\phi\sim_{i,d}\psi$, then $\nsv_i(\phi\lor\psi,d)\in\Gamma$. By $\texttt{TAUT}$ and \RE, $\vdash\nsv_i(\phi\lor\psi,d)\lra\nsv_i(\psi\lor\phi,d)$. By maximality of $\Gamma$, $\nsv_i(\psi\lor\phi,d)\in\Gamma$. So $\psi\sim_{i,d}\phi$.
            \item[Transitivity]
                This is more interesting. For $\phi, \psi, \chi\in G(i,d)$, suppose $\phi\sim_{i,d}\psi$ and $\psi\sim_{i,d}\chi$. By definition, we have $ \psb_i\psi$, $\nsv_i(\phi\lor\psi,d)$ and $\nsv_i(\psi\lor\chi,d)$ all in $\Gamma$. Notice that we have following derivation: \\
                $\begin{array}{lll}
                \ttt{[1]}  &  \psi\to(\phi\lor\psi)\land(\psi\lor\chi)  &  \TAUT \\
                \ttt{[2]}  &  \psb_i\psi\to\psb_i((\phi\lor\psi)\land(\psi\lor\chi))  &  \text{K rule} \\
                \ttt{[3]}  &
                \!\!\!  \begin{array}{r}
                            \psb_i((\phi\lor\psi)\land(\psi\lor\chi))\land\nsv_i(\phi\lor\psi,d)\land\nsv_i(\psi\lor\chi,d) \\
                            \to\nsv_i((\phi\lor\psi)\lor(\psi\lor\chi),d)
                        \end{array}
                           & \NSVOR\\
                \ttt{[4]}  &  (\phi\lor\chi)\to(\phi\lor\psi)\lor(\psi\lor\chi) & \TAUT \\
                \ttt{[5]}  &  \Box_i((\phi\lor\chi)\to(\phi\lor\psi)\lor(\psi\lor\chi)) & \NEC \\
                \ttt{[6]}  &  \nsv_i((\phi\lor\psi)\lor(\psi\lor\chi),d)\to\nsv_i(\phi\lor\chi,d) & \DISTNSV, \ttt{[5]}\\
                \ttt{[6]}  &  \psb_i\psi\land\nsv_i(\phi\lor\psi,d)\land\nsv_i(\psi\lor\chi,d)\to\nsv_i(\phi\lor\chi,d) & \!\ttt{[2][3][6]}
                \end{array}$\\
                Since $\Gamma$ is a maximal consistent set w.r.t. \SLKvr, this tells us that $\nsv_i(\phi\lor\chi,d)\in\Gamma$. So $\phi\sim_{i,d}\chi$.
        \end{description}

        Write $[\phi]_{i,d} = \llrr{\psi\in G(i,d)\mid \psi \sim_{i,d} \phi}$. Since $\Ag, \BP, \D$ are all countable, \LKvr is countable, then $G(i,d)$ is countable, so $\llrr{[\phi]_{i,d}\mid \phi\in G(i,d)}$ is countable. Then there is an injection $h_{i,d}$ from $\llrr{[\phi]_{i,d}\mid \phi\in G(i,d)}$ to \N. Now we can define $g$:
        $$ g(i,\phi,d) = \left\{
        \begin{array}{ll}
            h_{i,d}([\phi]_{i,d}) & \phi\in G(i,d) \\
            * & \text{otherwise}
        \end{array}\right.
        $$
        It's now easy to see that this definition satisfies (1) and (2) of definition \ref{def.canonicalmodel}.
    \end{proof}

    For future convenience, we call this construction as $F$, that is, $F(s,\Gamma,a,x) = \lr{\Gamma, f, g}$ where $f$ and $g$ are defined as above.

    After the above proposition, we are now able to prove existence lemma. First is the existence lemma for $\lnot\Box_i\phi$:

    \begin{lemma}\label{lem.notncs}
        For any $s\in S^c$, any $i\in \Ag$, any $\phi\in\LKvr$: $\lnot\ncs_i\phi\in s$ implies that there is a world $t\in S^c$ such that $s\rel{i}t$ and $\lnot\psi\in t$.
    \end{lemma}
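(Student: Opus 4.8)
The plan is to run the standard Henkin-style existence argument, with Proposition~\ref{prop.construction} doing the real work of supplying the extra saturation data $f, g$, so that nothing genuinely new is needed here.

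First I would establish that $\Delta := \{\psi \mid \ncs_i\psi \in s\} \cup \{\lnot\phi\}$ is consistent. This is the usual modal computation, using only \texttt{K}, \NEC, and propositional reasoning: if $\Delta$ were inconsistent there would be $\psi_1, \dots, \psi_n$ with each $\ncs_i\psi_j \in s$ and $\vdash (\psi_1 \wedge \cdots \wedge \psi_n) \to \phi$; then $\NEC$ and $\texttt{K}$ (with propositional logic) give $\vdash (\ncs_i\psi_1 \wedge \cdots \wedge \ncs_i\psi_n) \to \ncs_i\phi$, so maximality of $s$ forces $\ncs_i\phi \in s$, contradicting $\lnot\ncs_i\phi \in s$. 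By the Lindenbaum lemma, extend $\Delta$ to a maximal consistent set $\Gamma$; by construction $\{\psi \mid \ncs_i\psi \in s\} \subseteq \Gamma$ and $\lnot\phi \in \Gamma$.

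Next, since this $\Gamma$ meets exactly the hypothesis of Proposition~\ref{prop.construction}, I would pick any $x \in \N$ (say $x = 0$) and set $t := F(s, \Gamma, i, x) = \lr{\Gamma, f, g}$. The proposition delivers $t \in S^c$ with $s \rel{i} t$, and since $\Gamma_t = \Gamma$ contains $\lnot\phi$ we conclude $\lnot\phi \in t$, as desired. The point worth flagging is that there is essentially no remaining obstacle at this stage: Proposition~\ref{prop.construction} has already absorbed the one genuinely new difficulty of the logic — choosing $f$ and $g$ consistently with clauses (1), (2), (4) of Definition~\ref{def.canonicalmodel} — so the $\nabla$ operator plays no role in the present lemma, and the proof reduces to the classical consistency-plus-Lindenbaum step followed by a direct appeal to that proposition.
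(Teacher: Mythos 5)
Your proposal is correct and follows exactly the paper's own route: show $\{\psi\mid\ncs_i\psi\in s\}\cup\{\lnot\phi\}$ is consistent by the standard \texttt{K}/\NEC\ argument, extend it to a maximal consistent set by the Lindenbaum lemma, and then invoke Proposition~\ref{prop.construction} (the map $F$) to saturate it into a world $t\in S^c$ with $s\rel{i}t$. The paper merely labels the consistency step ``a standard modal logic exercise'' where you spell it out, and you correctly read the statement's $\lnot\psi$ as the intended $\lnot\phi$.
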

    \begin{proof}
        It is a standard modal logic exercise to show that $X = \llrr{\lnot\psi} \cup \llrr{\phi\mid\ncs_i\phi\in s}$ is consistent. By Lindenbaum Lemma (for \LKvr), $X$ can be extended into a \textbf{MCS} $\Gamma$. Then by proposition \ref{prop.construction}, $\Gamma$ can again be extended into a possible world $t = F(s,\Gamma,i,0) \in S^c$ such that $s\rel{i}t$. Since $\lnot\psi \in X$, $\lnot\psi\in t$.
    \end{proof}

    Now we need to deal with formulas in the form of $\lnot\nsv_i(\psi,d)$. Following the convention of dealing with $\lnot\ncs_i\psi$, what we need to do is to show that if $\lnot\nsv_i(\psi,d)$ is present in some possible world $s$, then there are indeed $t_1, t_2\in S^c$ such that $s\rel{i}t_1, s\rel{i}t_2$ and $f_{t_1}(d) \not= f_{t_2}(d)$. 
    More specifically:
    \begin{lemma}\label{lem.notnsv}
        For any $s\in S^c$ such that $\lnot\nsv_i(\psi,d)\in s$,
        there exists $t_1,t_2\in S^c$ such that $\psi\in t_1$, $\psi\in t_2$, $s\rel{i}t_1$, $s\rel{i}t_2$ and $f_{t_1}(d) \not= f_{t_2}(d)$.
    \end{lemma}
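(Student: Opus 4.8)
The plan is to realize $t_1$ and $t_2$ as outputs $F(s,\Gamma_j,i,x_j)$ of the construction from Proposition~\ref{prop.construction}, applied to suitable maximal consistent extensions $\Gamma_1,\Gamma_2$ of $X := \{\phi \mid \ncs_i\phi\in s\}\cup\{\psi\}$, with the number arguments $x_1,x_2$ chosen to force $f_{t_1}(d)\neq f_{t_2}(d)$. Since Proposition~\ref{prop.construction} already guarantees $t_j\in S^c$ and $s\rel{i}t_j$ whenever $\Gamma_j\supseteq\{\phi\mid\ncs_i\phi\in s\}$ is an MCS, and $\psi\in X\subseteq\Gamma_j$ takes care of $\psi\in t_j$, the only real work is to arrange the value disagreement.

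First I would collect the preliminaries. From $\lnot\nsv_i(\psi,d)\in s$, axioms \NSVBOT\ and \DISTNSV\ (instantiated with $\bot$) give $\vdash\ncs_i\lnot\psi\to\nsv_i(\psi,d)$, hence $\psb_i\psi\in s$; the standard $\ncs_i$-argument then shows $X$ is consistent, and moreover that any consistent $\Gamma\supseteq X$ has $\psb_i\phi\in s$ for every $\phi\in\Gamma$ (otherwise $\ncs_i\lnot\phi\in s$ would put $\lnot\phi$ into $X\subseteq\Gamma$). Writing $G = \{\chi\mid\nsv_i(\chi,d)\land\psb_i\chi\in s\}$ (the set $G(i,d)$ from the proof of Proposition~\ref{prop.construction} with $s$ in the role of $\Gamma$), it follows that for any MCS $\Gamma\supseteq X$ we have $\{\phi\in\Gamma\mid\nsv_i(\phi,d)\in s\}=\Gamma\cap G$; this set is precisely what decides, in the definition of $F$, whether $f(d)$ equals some $g_s(i,\phi,d)$ or the free number argument. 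Note also that $\psi\notin G$, since $\nsv_i(\psi,d)\notin s$.

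Then I would split on whether $X$ can be extended so as to avoid $G$. \textbf{Case A:} some MCS $\Gamma_1\supseteq X$ has $\Gamma_1\cap G=\emptyset$. Then $F(s,\Gamma_1,i,x)$ has $d$-value exactly $x$, so picking any MCS $\Gamma_2\supseteq X$, setting $t_2=F(s,\Gamma_2,i,0)$ and $t_1=F(s,\Gamma_1,i,f_{t_2}(d)+1)$ finishes this case. \textbf{Case B:} every MCS $\Gamma\supseteq X$ meets $G$, equivalently meets $C:=\{\chi\in G\mid X\cup\{\chi\}\text{ consistent}\}$. I claim there are $\chi_1,\chi_2\in C$ with $\nsv_i(\chi_1\lor\chi_2,d)\notin s$: otherwise every pair from $C$ satisfies $\nsv_i(\cdot\lor\cdot,d)\in s$, and compactness turns the Case~B hypothesis into $X\vdash\chi_1\lor\cdots\lor\chi_n$ for some $\chi_1,\dots,\chi_n\in C$; an induction on $n$ driven by \NSVOR\ and the K-rule — structurally the transitivity computation of Proposition~\ref{prop.construction}, overlapping the partial disjunctions at a single $\chi_k\in G$ so that the needed $\psb_i\chi_k$ is at hand — then yields $\nsv_i(\chi_1\lor\cdots\lor\chi_n,d)\in s$, and since $\ncs_i(\psi\to\chi_1\lor\cdots\lor\chi_n)\in s$ as well, \DISTNSV\ gives $\nsv_i(\psi,d)\in s$, contradicting the hypothesis. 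Given such $\chi_1,\chi_2$, extend $X\cup\{\chi_j\}$ to an MCS $\Gamma_j$ and set $t_j=F(s,\Gamma_j,i,0)$; by well-definedness of $f$ (Proposition~\ref{prop.construction}) we get $f_{t_j}(d)=g_s(i,\chi_j,d)$, and clause~(2) of Definition~\ref{def.canonicalmodel} forces $g_s(i,\chi_1,d)\neq g_s(i,\chi_2,d)$ precisely because $\nsv_i(\chi_1\lor\chi_2,d)\notin s$.

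The main obstacle is Case~B, and within it the extraction of the two separating formulas $\chi_1,\chi_2$: the point is that failure to separate would make a finite disjunction drawn from $C$ — one that $X$, hence $\psi$ relative to $\{\phi\mid\ncs_i\phi\in s\}$, entails — behave as a single $\nsv_i$-uniform formula, collapsing back to $\nsv_i(\psi,d)\in s$. Everything else is routine bookkeeping with Proposition~\ref{prop.construction} and the clauses of Definition~\ref{def.canonicalmodel}.
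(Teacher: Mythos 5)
Your proof is correct and follows essentially the same route as the paper's: the same dichotomy on whether $A\cup\{\psi\}$ has a maximal consistent extension avoiding $G(i,d)$, the same use of the free number argument of $F$ in the first case and of two non-$\sim_{i,d}$-equivalent members of $G(i,d)$ in the second, and the same refutation of the degenerate subcase by deriving $\nsv_i(\psi,d)\in s$ via an \NSVOR-driven induction followed by \DISTNSV. The only differences are organizational: you fold the paper's Cases 2.1 and 2.2 into one argument by contradiction (working with $C$ from the outset absorbs the paper's step of subtracting the $X$-inconsistent disjuncts), and you run the induction purely syntactically, in the style of the transitivity computation of Proposition~\ref{prop.construction}, rather than through clause (2) of Definition~\ref{def.canonicalmodel}.
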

    \begin{proof}
        Suppose $s\in S^c$ and $\lnot\nsv_i(\psi,d)\in s$. Now we intend to prove\medskip\\
        (!) there exists $t_1,t_2\in S^c$ such that $\psi\in t_1$, $\psi\in t_2$, $s\rel{i}t_1$, $s\rel{i}t_2$ and $f_{t_1}(d) \not= f_{t_2}(d)$.

        Again we use the notation $G(i,d) = \llrr{\phi\in \LKvr\mid\nsv_i(\phi,d)\land\psb_i\phi\in s}, \sim_{i,d} = \llrr{\lr{\phi_1,\phi_2}\in G(i,d)^2\mid\nsv_i(\phi_1\lor\phi_2,d)\in s}$ and $[\phi]_{i,d} = \llrr{\phi'\in G(i,d)\mid\phi\sim_{i,d}\phi'}$ as defined in proposition \ref{prop.construction}. Let $A = \llrr{\phi\mid\ncs_i\phi\in s}$, $A^+ = A\cup\llrr{\psi}$,  $\overline{G(i,d)} = \llrr{\lnot\chi\mid\chi\in G(i,d)}$. Note that $A^+$ is consistent. Suppose it is not, then there is a finite subset $B$ of $A$ such that $\vdash\bigwedge B\to \lnot \psi$. By $\NEC$ and distribution of $\Box_i$, $\vdash\bigwedge\Box_i B\to\Box_i(\lnot\phi)$. Since $\Box_i B\subseteq s$, $\Box_i(\lnot\psi)\in s$. $\lnot\psi$ is equivalent to $\psi\to\bot$ and this means $\Box_i(\psi\to \bot)\in s$. By $\DISTNSV$ and $\NSVBOT$, $\nabla_i(\psi,d)\in s$, contradicting to supposition that $\lnot\nsv_i(\psi,d)\in s$.

        Now we prove (!) by two cases:

        \textbf{Case 1}: $A^+\cup\overline{G(i,d)}$ is consistent. Then $A^+\cup\overline{G(i,d)}$ can be extended by Lindenbaum Lemma to a maximal consistent set, say $\Gamma$. Let $\Gamma_1 = \Gamma_2 = \Gamma$, $t_1 = F(s,\Gamma_1,i,0)$ and $t_2 = F(s,\Gamma_2,i,1)$ and we have the following:
        \begin{itemize}
            \item $\psi\in t_1$, $\psi\in t_2$, $s\rel{i}t_1$, $s\rel{i}t_2$. By the construction method of $F$, this is immediate.
            \item $f_{t_1}(d) = 0$, $f_{t_2}(d) = 1$. From the construction rule of $f$ in proposition \ref{prop.construction}, we can see that these are true, by the fact that for all $\phi\in\LKvr$, either $\phi\not\in \Gamma$ or $\nabla_i(\phi,d)\not\in s$ and . In fact if $\nabla_i(\phi,d)\in s$, then $\phi\in G(i,d)$, $\lnot\phi\in \overline{G(i,d)}$. This means $\lnot\phi\in\Gamma$ and by the consistency of $\Gamma$, $\phi\not\in\Gamma$
        \end{itemize}
        With the above facts, the (!) is obviously true now.

        \textbf{Case 2}: $A^+\cup\overline{G(i,d)}$ is inconsistent. Then there is a finite subset $\overline{G(i,d)}_0$ of $\overline{G(i,d)}$ and a finite subset $A_0$ of $A$ such that $\vdash\bigwedge A_0\land\psi\to\lnot\bigwedge\overline{G(i,d)}_0$. Let $G(i,d)_0 = \llrr{\chi\mid\lnot\chi\in\overline{G(i,d)}_0}$. By the fact that $\vdash\lnot\bigwedge\overline{G(i,d)}_0\lra\bigvee G(i,d)_0$, we have $\vdash\bigwedge A_0\land\psi\to \bigvee G(i,d)_0$. For convenience, name this formula $\delta_0$

        At this point, we need to split case 2 into two subcases, with the following proposition as the dividing line:
        \medskip\\
        (*) for any $\chi_0\in G(i,d)$ there is a $\chi\in G(i,d)$ such that $\chi\not\in [\chi_0]_{i,d}$ and $A^+\cup\llrr{\chi}$ is consistent.

        \textbf{Case 2.1}: (*) is true. Since this still under Case 2, $A^+\cup\overline{G(i,d)}$ is inconsistent, which implies that there is a $\chi_1\in G(i,d)$ such that $A^+\cup\llrr{\chi_1}$ is consistent ($A^+$'s consistency is needed here). This implies, with (*), that there is a $\chi_2\in G(i,d)$ such that $\chi_2\not\in[\chi_1]_{i,d}$ and $A^+\cup\llrr{\chi_2}$ is consistent. The former means $\chi_1\not\sim_{i,d}\chi_2$, thus $\nsv_i(\chi_1\lor\chi_2,d)\not\in s$, which in turn means $g_s(i,\chi_1,d) \not= g_s(i,\chi_2,d)$ by the definition \ref{def.canonicalmodel}. Now since $A\cup\llrr{\chi_1}$ and $A\cup\llrr{\chi_2}$ are both consistent, let $\Gamma_1$ and $\Gamma_2$ be the \textbf{MCS}s extended by them respectively, and $t_1 = F(s,\Gamma_1,i,d)$ and $t_2 = F(s,\Gamma_2,i,d)$. It is not hard to see that $f_{t_1}(d)=g_{s}(i,\chi_1,d)\not=g_s(i,\chi_2,d)=f_{t_2}(d)$, which justifies (!).

        \textbf{Case 2.2}: (*) is false. Then the following\medskip\\
        (**) there exists a $\chi_0\in G(i,d)$ such that for any $\chi\in G(i,d)$, if $\chi\not\in [\chi_0]_{i,d}$ then $A^+\cup\llrr{\chi}$ is inconsistent.

        is true. Under this supposition, let $\chi_0$ be the element in $G(i,d)$ such that for any $\chi\in G(i,d)$. If $\chi\not\in[\chi_0]_{i,d}$ then $A^+\cup\llrr{\chi}$ is inconsistent. Further, let $[\chi_0]^0_{i,d} = G(i,d)_0\cap[\chi_0]_{i,d}$. Then, for any $\chi\in G(i,d)_0\setminus [\chi_0]^0_{i,d}$, $\chi \not\in [\chi_0]_{i,d}$, so $A^+\cup\llrr{\chi}$ is inconsistent, which means $\vdash \bigwedge A_0'\land\psi\to\lnot\chi$ (note it as $\delta_\chi$) for some finite subset $A_0'$ of set $A$. Combining $\vdash\delta_\chi$ for all $\chi\in G(i,d)_0\setminus [\chi_0]^0_{i,d}$, we have $\vdash\bigwedge A_0''\land\psi\to\lnot\bigvee(G(i,d)_0\setminus [\chi_0]^0_{i,d})$ again for some finite subset $A_0''$ of $A$. Note this long formula by $\delta_1$. Notice the following proof schema:
        $$\begin{array}{ll}
        (1) & \vdash \bigvee X\to(\bigvee Y\lor\bigvee (X\setminus Y))\\
        (2) & \vdash \bigvee X\to(\lnot\bigvee(X\setminus Y)\to\bigvee Y)\\
        (3) & \vdash (\bigvee X \land \lnot\bigvee(X\setminus Y))\to\bigvee Y
        \end{array}$$
        Using this schema, and the fact that $\vdash\delta_0$, $\vdash\delta_1$, we have the following proof:
        $$\begin{array}{ll}
        (4) & \vdash\bigwedge A_0\land\psi\to \bigvee G(i,d)_0 \qquad\qquad\qquad\quad [\text{this is }\delta_0]\\
        (5) & \vdash\bigwedge A_0''\land\psi\to\lnot\bigvee(G(i,d)_0\setminus [\chi_0]^0_{i,d}) \qquad [\text{this is }\delta_1]\\
        (6) & \vdash\bigwedge A_0\land \bigwedge A_0''\land\psi \to \bigvee G(i,d)_0 \land \lnot\bigvee(G(i,d)_0\setminus [\chi_0]^0_{i,d}) \\
        (7) & \vdash\bigwedge A_0\land \bigwedge A_0''\land\psi \to \bigvee[\chi_0]^0_{i,d} \\
        (8) & \vdash\bigwedge A_0\land \bigwedge A_0''\to (\psi \to \bigvee[\chi_0]^0_{i,d}) \\
        (9) & \vdash\ncs_i(\bigwedge A_0\land \bigwedge A_0'')\to \ncs_i(\psi \to \bigvee[\chi_0]^0_{i,d}) \\
        \end{array}$$
        By definition of $A$ and maximality, $\ncs_i(\bigwedge A_0\land \bigwedge A_0'') \in s$, so (***): $\ncs_i(\psi\to\bigvee[\chi_0]^0_{i,d})\in s$.

        Now we use a simple induction to show that $\nsv_i(\bigvee[\chi_0]^0_{i,d},d)\in s$. Enumerate the formula in $[\chi_0]^0_{i,d}$ as $\lambda_1, \lambda_2, \ldots \lambda_n$ and inductively define $\Lambda_1 = \llrr{\lambda_1}, \Lambda_k = \Lambda_{k-1}\cup\llrr{\lambda_k}$.
        \begin{description}
            \item[Induction Hypothesis] $g_s(i,\bigvee\Lambda_k,d) = g_s(i,\chi_0,d)$ and $\nsv_i(\bigvee\Lambda_k,d)\in s$.
            \item[Induction Basis] $\lambda_1\sim_{i,d}\chi_0$ so $\nsv_i(\lambda_1\lor\chi_0,d)\in s$, then $g_s(i,\lambda_1,d) = g_s(i,\chi_0,d)$. Since $\lambda_1\in G(i,d)$, $\nsv_i(\lambda_1,d)\in s$ automatically.
            \item[Induction Step] For $\Lambda_k = \Lambda_{k-1}\cup\llrr{\lambda_k}$, firstly, by the same kind of argument in induction basis, $g_s(i,\lambda_k,d) = g_s(i,\chi_0,d)$. By IH, $g_s(i,\chi_0,d) = g_s(i,\bigvee\Lambda_{k-1},d)$. So by the requirement (2) of a suitable possible canonical world in definition \ref{def.canonicalmodel} imposed on $s$, $\nsv_i(\bigvee\Lambda_{k-1}\lor\lambda_k,d) = \nsv_i(\bigvee\Lambda_k,d) \in s$. Since $\vdash\lambda_k\to\bigvee\Lambda_k$, $\vdash\psb_i\lambda_k\to\psb_i\Lambda_k$. Yet $\lambda_k\in G(i,d)$ so $\psb_i\lambda_k\in s$, then $\psb_i\bigvee\Lambda_k\in s$. Then both $g_s(i,\lambda_k,d)$ and $g_s(i,\Lambda_k,d)$ are not $*$. So by (2) of definition \ref{def.canonicalmodel} again, $g_s(i,\Lambda_k,d) = g_s(i,\lambda_k,d) = g_s(i,\chi_0,d)$.
        \end{description}
        By induction proof, $\nsv_i(\bigvee[\chi_0]^0_{i,d},d) = \nsv_i(\bigvee\Lambda_n,d)\in s$. Then with \DISTNSV\ and (***) we have proven, $\nsv_i(\psi,d)\in s$. But the proposition we intend to prove supposes $\lnot\nsv_i(\psi,d)\in s$. Thus this case 2.2 is actually empty.
    \end{proof}

    Now we are prepared to prove the truth lemma for $\M^c$:
    \begin{lemma}[truth lemma]
        For any $s\in S^c$ and any $\phi\in\LKvr$, $\phi\in s$ iff $\M^c,s\Vdash\phi$.
    \end{lemma}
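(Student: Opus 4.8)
The plan is to prove the truth lemma by induction on the structure of $\phi$, relying on the evident canonical valuations $V^c(p) = \llrr{s\in S^c\mid p\in s}$ and $V^c_{\D}(d,s) = f_s(d)$. The base case $\phi = \top$ and the Boolean cases $\lnot\psi$ and $\psi\land\chi$ are entirely standard, using only that $\Gamma_s$ is a maximal consistent set so that membership behaves well under negation and conjunction; the atomic case $\phi = p$ is immediate from the definition of $V^c$. So the work concentrates on the two modal cases, and for each of them the two defining clauses of $\rel{i}$ and the two existence lemmas are exactly what is needed.

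For $\phi = \ncs_i\psi$, the left-to-right direction is handled by clause (3) of Definition \ref{def.canonicalmodel}: if $\ncs_i\psi\in s$ and $s\rel{i}t$ then $\psi\in t$, hence $\M^c,t\Vdash\psi$ by the induction hypothesis, and since $t$ was arbitrary, $\M^c,s\Vdash\ncs_i\psi$. The converse is the contrapositive: from $\ncs_i\psi\notin s$ we get $\lnot\ncs_i\psi\in s$ by maximality, Lemma \ref{lem.notncs} supplies $t\in S^c$ with $s\rel{i}t$ and $\lnot\psi\in t$, and the induction hypothesis gives $\M^c,t\not\Vdash\psi$, so $\M^c,s\not\Vdash\ncs_i\psi$.

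For $\phi = \nsv_i(\psi,d)$, the left-to-right direction is where clause (4) does its job. Assume $\nsv_i(\psi,d)\in s$ and let $t_1,t_2\in S^c$ be arbitrary with $s\rel{i}t_1$, $s\rel{i}t_2$, $\M^c,t_1\Vdash\psi$ and $\M^c,t_2\Vdash\psi$; by the induction hypothesis $\psi\in t_1$ and $\psi\in t_2$, so clause (4) forces $f_{t_1}(d) = g_s(i,\psi,d)$ and $f_{t_2}(d) = g_s(i,\psi,d)$ (in particular $g_s(i,\psi,d)\neq *$, since $f_{t_1}(d),f_{t_2}(d)\in\N$), whence $V^c_{\D}(d,t_1) = V^c_{\D}(d,t_2)$ and therefore $\M^c,s\Vdash\nsv_i(\psi,d)$. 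The converse is again by contraposition, and this is exactly where the heavy machinery is invoked: from $\nsv_i(\psi,d)\notin s$, i.e. $\lnot\nsv_i(\psi,d)\in s$, Lemma \ref{lem.notnsv} produces $t_1,t_2\in S^c$ with $\psi\in t_1$, $\psi\in t_2$, $s\rel{i}t_1$, $s\rel{i}t_2$ and $f_{t_1}(d)\neq f_{t_2}(d)$; the induction hypothesis upgrades $\psi\in t_1$ and $\psi\in t_2$ to $\M^c,t_1\Vdash\psi$ and $\M^c,t_2\Vdash\psi$, and since $V^c_{\D}(d,t_1)\neq V^c_{\D}(d,t_2)$ this is a pair of $i$-successors of $s$ witnessing $\M^c,s\not\Vdash\nsv_i(\psi,d)$.

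Since the genuinely hard existence lemmas (\ref{lem.notncs} and especially \ref{lem.notnsv}) are already in hand, the induction itself is short. The only point that needs a moment's care is in the $\nsv_i$ case: one must note that clause (4) really is applicable there, because the successors appearing in the semantic clause are, by the induction hypothesis, genuine $\psi$-worlds reachable by $\rel{i}$, so no separate argument that $g_s(i,\psi,d)$ is non-star is required beyond the parenthetical remark above. Everything else is bookkeeping, and the main obstacle of the whole development has been discharged before reaching this lemma.
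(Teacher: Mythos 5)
Your proposal is correct and follows essentially the same route as the paper: a structural induction whose only nontrivial cases are the two modal operators, with clauses (3) and (4) of Definition \ref{def.canonicalmodel} handling the membership-to-truth directions and Lemmas \ref{lem.notncs} and \ref{lem.notnsv} supplying the witnesses for the contrapositives. Your explicit spelling-out of the canonical valuations $V^c$ and $V^c_{\D}$ and the parenthetical check that $g_s(i,\psi,d)\neq *$ are harmless additions that the paper leaves implicit.
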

    \begin{proof}
        The inductive proof of this is a common practice in modal logic. Here we only show the two non-trivial cases:
        \begin{description}
            \item[$\phi = \ncs_i\psi$] If $\ncs_i\psi\in s$, then for any $t\in S^c$ such that $s\rel{i} t$, by the clause (3) of definition \ref{def.canonicalmodel}, $\psi\in t$, which by IH means $\M^c,t\Vdash\psi$. So $\M^c,s\Vdash\ncs_i\psi$. For the other direction, suppose $\ncs_i\psi\not\in s$, then $\lnot\ncs_i\psi\in s$. By lemma \ref{lem.notncs} and IH, $\M^c,s\not\Vdash\ncs_i\psi$.
            \item[$\phi = \nsv_i(\psi,d)$] If $\nsv_i(\psi,d)\in s$, then for any $t_1, t_2\in S^c$ such that $s\rel{i}t_1,t_2$ and $\psi\in t_1, t_2$, by the clause (4) of definition \ref{def.canonicalmodel}, $f_{t_1}(d) = g_s(i,\psi,d) = f_{t_2}(d)$. For the other direction, suppose $\nsv_i(\psi,d)\not\in s$, then $\lnot\nsv_i(\psi,d) \in s$. By lemma \ref{lem.notnsv} and IH, we have $t_1,t_2\in S^c$ such that $s\rel{i}t_1,t_2$, $\M^c, t_1\Vdash\psi, \M^c,t_2\Vdash\psi$ and $f_{t_1}(d) \not= f_{t_2}(d)$. So $\M^c,s\not\Vdash\nsv_i(\psi,d)$.
        \end{description}
    \end{proof}

    Based on this, we are able to present:
    \begin{theorem}
        \SLKvr is sound and strongly complete for \LKvr.
    \end{theorem}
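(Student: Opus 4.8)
The plan is to prove soundness and strong completeness of \SLKvr\ for \LKvr\ by the standard route, leaning on all the machinery developed above.

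For soundness, I would argue that each axiom schema is valid on the class \mclass{K}\ and each rule preserves validity. \TAUT\ and \MP\ are routine. \texttt{K}\ and \NEC\ are the usual normal-modal-logic facts about $\ncs_i$. For \DISTNSV, I would unfold the semantics: if $\ncs_i(\phi\to\psi)$ holds at $s$ then every $i$-successor satisfying $\phi$ also satisfies $\psi$, so any two $i$-accessible $\phi$-worlds are in particular $i$-accessible $\psi$-worlds, hence agree on $d$ when $\nsv_i(\psi,d)$ holds; thus $\nsv_i(\phi,d)$ holds. \NSVBOT\ is vacuous since no world satisfies $\bot$. For \NSVOR, suppose $\psb_i(\phi\wedge\psi)$, $\nsv_i(\phi,d)$, and $\nsv_i(\psi,d)$ all hold at $s$; take two $i$-accessible worlds $t_1,t_2$ each satisfying $\phi\vee\psi$, and a witness $t_0$ with $s\rel{i}t_0$ and $\M,t_0\vDash\phi\wedge\psi$. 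Then $V_\D(d,t_0)$ is pinned to $V_\D(d,t_1)$ through whichever of $\phi,\psi$ holds at $t_1$ (using $t_0$'s membership in both the $\phi$-class and $\psi$-class), and likewise to $V_\D(d,t_2)$, so $V_\D(d,t_1)=V_\D(d,t_2)$. Finally \RE\ preserves validity because $\psi\lra\chi$ being valid means $\psi$ and $\chi$ are true in exactly the same pointed models, so substituting one for the other inside any formula cannot change its truth value — this is a straightforward induction on formula structure, where the $\nsv_i$ case uses that the semantics of $\nsv_i(\cdot,d)$ depends on its first argument only through its truth set.

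For completeness, I would reduce $\vdash\phi$ to validity by contraposition in the usual way: if $\not\vdash\phi$, then $\{\lnot\phi\}$ is consistent, so by Lindenbaum's Lemma it extends to a maximal consistent set $\Gamma_0$; by Proposition~\ref{prop.construction} (applied with an arbitrary base world, agent, and natural number — or, more carefully, observing that the construction of $f$ and $g$ for a root world can be carried out unconditionally since no clause (4) constraint is inherited) we obtain a world $s_0\in S^c$ with $\Gamma_{s_0}=\Gamma_0$. By the truth lemma, $\M^c,s_0\not\Vdash\phi$, so $\phi$ is not valid on \mclass{K}. More generally, to get strong completeness, if $T\not\vdash\phi$ then $T\cup\{\lnot\phi\}$ is consistent, extend it to an \MCS, realize it as a world in $S^c$ as above, and the truth lemma gives a pointed model of $T$ refuting $\phi$. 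The one small gap to patch is the realization of an arbitrary \MCS\ as the first component of some $s\in S^c$: I would note that for any \MCS\ $\Gamma$ one can directly define $f,g$ satisfying clauses (1) and (2) — clause (1) forces the domain of non-star values of $g$, and the partition argument in Proposition~\ref{prop.construction} (reflexivity, symmetry, transitivity of $\sim_{i,d}$, then an injection into \N) applies verbatim with $s$ replaced by $\Gamma$ — and clauses (3),(4) concern only the relation, not membership in $S^c$.

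The main obstacle is essentially bookkeeping rather than a deep difficulty: the real content — the delicate existence lemma for $\lnot\nsv_i(\psi,d)$ (Lemma~\ref{lem.notnsv}, with its case split and the induction collapsing an equivalence class into a single $g$-value) and the saturation construction (Proposition~\ref{prop.construction}) — has already been done. What remains is to make sure the truth lemma's induction goes through at the base and Boolean cases (immediate from maximality and consistency of \MCS s), and to confirm that every \MCS\ genuinely sits inside some world of the canonical model so that the Henkin argument closes. I expect the subtlest point to be double-checking the \RE\ rule's soundness proof, since $\nsv_i(\psi,d)$ is not a congruence for arbitrary equivalences but only for logical equivalences that preserve truth sets — which is exactly what validity of $\psi\lra\chi$ supplies — so the replacement-of-equivalents induction must be stated at the semantic level.
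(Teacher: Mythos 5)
Your proposal is correct and follows essentially the same route as the paper: extend the consistent set to an \MCS\ via Lindenbaum, realize it as a world of the canonical model by constructing $f$ arbitrarily (no clause~(4) constraint at a root) and $g$ via the $\sim_{i,d}$ partition argument of Proposition~\ref{prop.construction}, and conclude by the truth lemma. The paper dismisses soundness as ``rather simple'' where you spell out the validity of \DISTNSV, \NSVOR, and \RE; your semantic arguments there are correct and fill in exactly the details the paper omits.
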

    \begin{proof}
        Soundness is rather simple. For any consistent set $\Delta\subseteq \LKvr$ , using Lindenbaum Lemma for \LKvr, there exists a \textbf{MCS} $\Gamma$ such that $\Delta\subseteq\Gamma$. Now let $f$ be a constant function from $\D$ to $0$, and $g$ be defined in the exactly same fashion as in proposition \ref{prop.construction}. According to definition \ref{def.canonicalmodel}, $s = \lr{\Gamma,f,g}\in S^c$, so by truth lemma, for any $\phi\in\Delta, \M^c,s\Vdash\phi$ and thus $\Delta$ is satisfiable. Then strong completeness follows.
    \end{proof}

\section{Complexity}

    In this section, we will give a \PSPACE\ algorithm in light of tableau method for the satisfiability problem of \LKvr. Since \LKvr contains \textbf{K}, the lower bound is also \PSPACE. So we can conclude that the decision problem of \LKvr is \PSPACE-complete.

    \subsection{Rules of tableau}

    \begin{definition}
        A propositional tableau is a set of formula $X$ satisfying the following:
        \begin{itemize}
            \item if $\lnot\lnot\phi\in T$ then $\phi\in T$,
            \item if $\lnot(\phi\land\psi)\in T$ then $\lnot\phi\in T$ or $\lnot\psi\in T$,
            \item if $\phi\land\psi\in T$ then $\phi\in T$ and $\psi\in T$,
            \item if $\phi\in T$ then $\lnot\phi\not\in T$ and vice versa,
        \end{itemize}
        We call a violation of the last clause ``blatantly inconsistent''. $X$ is fully expanded if and only if for any $\phi\in X$ and $\psi$ a subformula of $\phi$, either $\psi$ or $\lnot\psi$ is in $X$.
    \end{definition}

    \begin{definition}
        A state is a tuple $\lr{X, g, h, ha, hb}$ satisfying:
        \begin{itemize}
            \item $X$ is a fully expanded propositional tableau.
            \item Let $E_X = \llrr{\lr{i,d}\mid \text{for some}\ \phi, \nabla_i(\phi,d)\in X}$, $G_X(i,d) = \llrr{\phi\mid\nabla_i(\phi,d)\in X}$, $E_X(i) = \llrr{d\mid\lr{i,d}\in E_X}$.
            \item $g$ is a function defined on set $E_X$. $g(i,d)$ is a 2-tuple $\lr{A, B}$ such that:
                \begin{itemize}
                    \item $A\subseteq G_X(i,d)$, $B\subseteq \mathcal{P}(G_X(i,d))$;
                    \item $A\cup\bigcup B = G_X(i,d)$, $A\cap\bigcup B = \emptyset$;
                    \item $B$ is a partition of $\bigcup B$, always including empty set;
                \end{itemize}
                In the sequel let $g(i,d)[1]$ denote such $A$ and $g(i,d)[2]$ for such $B$.
            \item $h$ is a function defined on set $\llrr{\lr{i,\phi}\mid\lnot\Box_i\phi\in X}$. $h(i,\phi)$ is again a function defined on $E_X(i)$. For every $d\in E_X(i)$, $h(i,\phi)(d)\in g(i,d)[2]$.
            \item $ha$, $hb$ are both function defined on set $\llrr{\lr{i,\phi,d}\mid\lnot\nabla_i(\phi,d)}$. $ha(i,\phi,d)$ and $ha(i,\phi,d)$ are again functions defined on set $E_X(i)\cup\llrr{d}$ such that for $d'$ in their domain:
                \begin{itemize}
                    \item if $d'\in E_X(i)$, then $g(i,d')$ is defined, and $ha(i,\phi,d)(d')\in g(i,d')[2]$, $hb(i,\phi,d)(d')\in g(i,d')[2]$;
                    \item if $d'\not\in E_x(i)$, then $d' = d$. In this case $ha(i,\phi,d)(d') = hb(i,\phi,d)(d') = \emptyset$;
                    \item either $ha(i,\phi,d)(d) \not= hb(i,\phi,d)(d)$ or both of them are $\emptyset$.
                \end{itemize}
        \end{itemize}
    \end{definition}

    As we did in the proof of completeness, these functions $g, h, ha, hb$ are also ``extra information''. The function $g$ here is actually a enumeration of all possible equivalence relation $\sim_{i,d}$ given in the proof of proposition \ref{prop.construction}.

    It is worthwhile here to briefly discuss the number of possible $g, h, ha, hb$ for a given $X$. Obviously $|E_X|, |E_X(i)|, |G_X(i,d)| \le |X|$. For function $g$, note that $A$ and $B$ together forms a partition of $|G_X(i,d)|$. So the cardinality of the range of $g$ is at most $|X|^{|X|}$. Since the domain of $g$ is $E_X$, the cardinality of the domain of $g$ is at most $|X|$. Thus the total number of possible $g$ is at most ${|X|^{|X|}}^{|X|} = |X|^{|X|^2}$. Similarly, the number of all possible $h$, $ha$ and $hb$ are bounded by $|X|^{|X|}$. Summing all these together, given $X$, the number of all possible $\lr{g, h, ha, hb}$ is at most $|X|^{|X|^2 + 3 \times |X|}$.

    Now we present the method of deciding the satisfiability of a \LKvr formula $\phi_0$ trough building a tree. In the following rules, $L$ means the formula set of a node, $F$ represents the additional information needed ($g, h, ha, hb)$, and $C$ is a partial function from $D(\phi_0)$ to $\mathbb{Z}$ represents the required assignments of value names occurred in $\phi_0$. Since the set of all finite subsets of a countable set is also countable, there is a function, say, $code(X)$ to code each finite set of formulas into a unique positive integer.
    \begin{enumerate}
        \item Construct a tree with a single node $s_0$ as its root, and let $L(s_0) = \llrr{\phi_0}, F(s_0) = \emptyset, C(s_0) = \emptyset $.
        \item Repeatedly try each of following rules in their order until none of them applies:
        \begin{enumerate}
            \item \textbf{Forming propositional tableau}: if $s$ is a leaf node, $L(s)$ is not blatantly inconsistent and not a propositional tableau, then there must be a $\psi\in L(s)$ such that following 3 rules applies:
                \begin{enumerate}
                    \item if $\psi = \lnot\lnot\chi$, add a new node $s'$ and an edge between $s$ and $s'$ to the tree(i.e. a successor of $s$), and set $L(s') = L(s)\cup\llrr{\chi}, F(s') = F(s), C(s') = C(s)$.
                    \item if $\psi = \lnot(\chi_1\land\chi_2)$, add two successor $s_1, s_2$ of $s$, and set $L(s_i) = L(s)\cup\llrr{\lnot\chi_i}, F(s_i) = F(s), C(s_i) = C(s)$ for $i = 1, 2$.
                    \item if $\psi = \chi_1\land\chi_2$, add a successor $s'$ of $s$ and set $L(s') = L(s)\cup\llrr{\chi_1,\chi_2}, F(s') = F(s), C(s') = C(s)$.
                \end{enumerate}
            \item \textbf{Forming fully expanded propositional tableau}: if $s$ is a leaf node, $L(s)$ is a propositional tableau but not a fully expanded propositional tableau, then there must be $\phi\in Sub(L(s))$ such that $\phi$ and $\lnot\phi$ are both not in $L(S)$. In this case add two successor $s_1, s_2$ of $s$ and set $L(s_1) = L(s)\cup\llrr{\phi}$, $L(s_2) = L(s)\cup\llrr{\lnot\phi}$, $F(s_1) = F(s_2) = F(s), C(s_1) = C(s_2) = C(s)$.
            \item \textbf{forming state}: if $s$ is a leaf node, $L(s)$ is a fully expanded propositional tableau, but $\lr{L(s),F(s)}$ is not a state, then for all function tuple $F'$ such that $\lr{L(s), F'}$ is a state, add a successor $s'$ to $s$ and set $L(s') = L(s), F(s') = F', C(s') = C(s)$. Notice that the total number of such $F'$ is bounded by $|\phi_0|^{|\phi_0|^2 + 3\times |\phi_0|}$, as argued above.
            \item \textbf{Add labeled successors}: if $s$ is a leaf node, $\lr{L(s),F(s)}$ is a state and in $L(s)$ there are at least one formula of the form $\lnot\Box_i\phi$ or $\lnot\nabla_i(\phi,d_0)$, then there should be some labeled successors to $s$:
                \begin{itemize}
                    \item For each $\phi$ such that $\lnot\Box_i\phi\in L(s)$, add an $i$-successor(i.e. with an edge labeled $i$) $s'$ to $s$ and set $L(s') = $
                        $$\llrr{\lnot\phi}\cup L(s)\backslash\Box_i\cup\bigcup_{d\in E_{L(s)}(i)}\lnot\big(g_s(i,d)[1] \cup \bigcup g_s(i,d)[2]\backslash h_s(i,\phi)(d)\big)$$
                        $C(s') = h_s(i,\phi)$ and $F(s')$ all empty functions.
                    \item For each $\phi$ such that $\lnot\nabla_i(\phi,d_0)\in L(s)$, add two $i$-successor $s_a$ and $s_b$ to $s$ and for $\boldsymbol{x} = a, b$, set $L(s_{\boldsymbol{x}}) = $
                        $$\llrr{\phi}\cup L(s)\backslash\Box_i\cup\bigcup_{d\in E_{L(s)}(i)}\lnot\big(g_s(i,d)[1] \cup \bigcup g_s(i,d)[2]\backslash h\boldsymbol{x}_s(i,\phi,d_0)(d)\big)$$
                        Set $C(s_{\boldsymbol{x}}) = h\boldsymbol{x}_s(i,\phi,d_0)$ for $\boldsymbol{x} = a,b$. If $ha_s(i,\phi,d_0)(d_0) = hb_s(i,\phi,d_0)(d_0)$ then change $C(s_a)(d_0)$ to $\bullet$ and $C(s_b)(d_0)$ to $\circ$. Finally set $F(s_{\boldsymbol{x}}) = \emptyset$.
                \end{itemize}
            \item \textbf{Mark satisfiable}: if $s$ is not yet marked, non of the above three rules applies, and all its successors(possibly none) have been marked, then:
                \begin{itemize}
                    \item if the edges to the successors of $s$ are not labeled, then mark $s$ as "satisfiable" if any one of its successors is marked "satisfiable", otherwise mark "unsatisfiable".
                    \item if the edges to the successors of $s$ are labeled, then mark $s$ as "satisfiable" if all of its successors are marked "satisfiable", otherwise mark "unsatisfiable".
                    \item if $s$ has no successors, then mark $s$ as "satisfiable" if $L(s)$ is not blatantly inconsistent, otherwise mark "unsatisfiable".
                \end{itemize}
        \end{enumerate}
    \item if root $s_0$ is marked "satisfiable" then return $\phi_0$ is satisfiable, otherwise $\phi_0$ is unsatisfiable.
    \end{enumerate}

    \begin{lemma}
        For any \LKvr formula $\phi_0$, the tree construction method defined above terminates.
    \end{lemma}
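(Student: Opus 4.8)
The plan is to prove that the tree built by the procedure is finite; termination then follows by a counting argument on rule applications. Indeed, each of the rules (a)--(d) turns a leaf into an internal node, so it can fire at most once per node, while the marking rule (e) changes a node's status from unmarked to marked, so it too can fire at most once per node. Hence if the tree has finitely many nodes, only finitely many rule applications occur and step 3 of the procedure is reached. So everything reduces to bounding the tree, which I would do by bounding both its branching and the length of its branches.

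For branching: rules (a), (b) and (d) add at most $2|L(s)|$ successors to a node, and rule (c) at most $|\phi_0|^{|\phi_0|^2+3|\phi_0|}$, as already estimated in the text, so every node has finitely many children and it remains to bound the depth. The first ingredient is that only finitely many labels $L(s)$ occur: I would check, by induction along the construction, that $L(s)\subseteq Sub^+(\phi_0)$ for every node $s$. The base case $L(s_0)=\llrr{\phi_0}$ is clear; rules (a) and (b) only introduce subformulas of formulas already present, or single negations of such subformulas, and rule (d) introduces, besides those, only formulas $\psi$ with $\Box_i\psi$ already present — all of which remain in $Sub^+(\phi_0)$ because $Sub(Sub^+(\phi_0))=Sub^+(\phi_0)$. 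Since $Sub^+(\phi_0)$ is finite, $L(\cdot)$ ranges over the finite set $\pow{Sub^+(\phi_0)}$.

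For branch length: along any branch, between two consecutive labeled edges (those created by rule (d)) only rules (a), (b), (c) fire; rules (a) and (b) act on a subformula witnessing the tableau defect, hence strictly enlarge $L(s)$, and rule (c) leaves $L(s)$ unchanged but can fire at most once, since afterwards $\lr{L(s),F(s)}$ is already a state. Hence each such segment has length at most $|Sub^+(\phi_0)|+2$. Moreover, crossing a labeled edge strictly decreases $depth(L(s))$: every formula in the label of a labeled successor is $\psi$ for some $\Box_i\psi\in L(s)$, or $\lnot\phi$/$\phi$ extracted from some $\lnot\Box_i\phi\in L(s)$ or $\lnot\nsv_i(\phi,d_0)\in L(s)$, or $\lnot\chi$ for some $\chi$ with $\nsv_i(\chi,d)\in L(s)$, and in each case its modal depth is at most $depth(L(s))-1$ (using $depth(\nsv_i(\psi,d))=depth(\psi)+1$); since rule (d) requires $L(s)$ to contain a formula of modal depth at least $1$ and $depth(L(s_0))=depth(\phi_0)$, at most $depth(\phi_0)$ labeled edges occur on any branch. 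Combining, every branch has length at most $(depth(\phi_0)+1)(|Sub^+(\phi_0)|+2)$, so the tree has bounded depth and bounded branching, hence is finite.

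The one genuinely delicate step I expect is the induction $L(s)\subseteq Sub^+(\phi_0)$: a naive closure argument would let rule (b) stack negations and produce $\lnot\lnot\chi,\lnot\lnot\lnot\chi,\dots$, so one has to use the side condition of rule (b) — it acts on a subformula $\phi$ only when both $\phi$ and $\lnot\phi$ are absent from $L(s)$ — together with the inductive hypothesis to argue that such a $\phi$ must already be a subformula of $\phi_0$, not merely of some previously added negation. Once that is settled, the remaining verifications are routine bookkeeping.
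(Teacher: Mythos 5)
Your proof is correct and follows essentially the same strategy as the paper's: bound the branching by the estimate on the number of possible $F$, bound unlabeled chains by the strict growth of $L(s)$ inside the finite set $Sub^+(\phi_0)$ (with rule (c) firing at most once), and bound the number of labeled edges per branch by the strict decrease of modal depth across rule (d). Your extra care about the closure $Sub(Sub^+(\phi_0))=Sub^+(\phi_0)$ and the side condition of rule (b) fills in a detail the paper treats as immediate, but the argument is the same.
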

    \begin{proof}
        It is immediate to see that if $s'$ is a successor of $s$ generated by rule (1) or (2) then $L(s)\subsetneq L(s')$ but for all $s$ in the tree, $L(s)\subseteq Sub^+(\phi_0)$. If $s'$ is generated from $s$ by rule (3), then rule (1) (2) and (3) are no longer applicable to $s'$. This means the longest chain of unlabeled edges will not exceed $2\times |\phi_0|+1$ otherwise there must be a blatant inconsistency. At the same time, if $s'$ is generated from $s$ by rule (4), then $depth(L(s')) < depth(L(s))$. Thus in any branch the number of labeled edges will not exceed $|\phi_0|$. So we can conclude that the depth of the tree is bounded by $2*|\phi_0|^2$. On the other hand, the branching number for any node is also bounded by $|\phi_0|^{|\phi_0|^2 + 3\times |\phi_0|}$. So this construction must terminate.
    \end{proof}

    After proving that this tableau must halt, the correctness of this tableau must be argued for now. Correctness means that, root $s_0$ is marked ``satisfiable'' if and only if $\phi_0$ is satisfiable. The following two lemmas present two directions of correctness respectively.

    \begin{lemma}
        For any \LKvr formula $\phi_0$, if after the tree construction defined above, root $s_0$ is marked ``satisfiable'', then $\phi_0$ is satisfiable.
    \end{lemma}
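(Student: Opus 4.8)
The plan is to extract, from the tree that the construction produces, an actual Kripke model satisfying $\phi_0$ at some world. Since the root $s_0$ is marked ``satisfiable'', I would trace a \emph{successful subtree}: starting from $s_0$, at every unlabeled branching node pick one successor that is marked satisfiable (such a child exists by the marking rule for unlabeled edges), and at every labeled branching node keep \emph{all} successors (they are all marked satisfiable by that rule). Following this downward, every branch of the successful subtree ends at a state-node $s$ whose leaf descendants are precisely the leaves of the subtree — but actually the more useful observation is that each maximal chain of unlabeled edges in the subtree terminates at a node carrying a genuine \emph{state} $\lr{L(s),F(s)}$ (rules (a),(b),(c) are driven to completion before rule (d) can fire, and rule (d) is what introduces labeled edges). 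So the successful subtree, after contracting each unlabeled-edge chain to its terminal state, becomes a tree of states connected by $i$-labeled edges; I would take this tree of states as the skeleton of the model.

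Next I would define the model $\M = \lr{W, O, \{\rel{i}\}, V, V_\D\}$ as follows. Let $W$ be the set of state-nodes occurring in the contracted successful subtree, let $O = \mathbb{Z}$, put $s\rel{i}t$ whenever $t$ is the state reached from $s$ by an $i$-labeled edge in the subtree (the edge created by rule (d)), set $s\in V(p)$ iff $p\in L(s)$, and define $V_\D(d,s)$ from the function $C(s)$: when $C(s)(d)$ is an integer use that integer, and when it is one of the placeholder symbols $\bullet$, $\circ$ (introduced at the end of rule (d) to force two witnesses to disagree on $d_0$) assign two distinct integers not used elsewhere, and when $C(s)(d)$ is undefined assign an arbitrary fixed value, say $0$. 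Then I would prove the \textbf{Truth Lemma}: for every state-node $s$ in $W$ and every $\phi\in L(s)$, $\M,s\Vdash\phi$, and symmetrically $\lnot\phi\in L(s)$ implies $\M,s\not\Vdash\phi$. This goes by induction on $\phi$. The propositional cases use that $L(s)$ is a fully expanded propositional tableau that is not blatantly inconsistent. For $\phi=\Box_i\psi$: if $\Box_i\psi\in L(s)$ then $\psi\in L(s)\backslash\Box_i\subseteq L(t)$ for every $i$-successor $t$, so the IH gives $\M,t\Vdash\psi$; if $\lnot\Box_i\psi\in L(s)$ then rule (d) created an $i$-successor $t$ with $\lnot\psi\in L(t)$, and IH gives $\M,t\not\Vdash\psi$. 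The cases $\phi=\nabla_i(\psi,d)$ and $\phi=\lnot\nabla_i(\psi,d)$ are the substantive ones: here I must use the bookkeeping in $F(s)=\lr{g_s,h_s,ha_s,hb_s}$ together with the precise definition of the successor label sets to show that any two $i$-successors $t_1,t_2$ that both satisfy $\psi$ necessarily land in the \emph{same} block of the partition $g_s(i,d)[2]$ and hence, by the way $C$ is propagated along rule (d), get the same value of $d$; and conversely that the two successors $s_a,s_b$ created for $\lnot\nabla_i(\psi,d_0)$ both satisfy $\psi$ (since $\psi\in L(s_a)\cap L(s_b)$, then IH) while $ha_s$ and $hb_s$ are designed so that $V_\D(d_0,s_a)\ne V_\D(d_0,s_b)$.

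The crux — and the step I expect to be the main obstacle — is precisely the $\nabla$/$\lnot\nabla$ cases of the Truth Lemma, i.e.\ verifying that the formula sets attached to labeled successors by rule (d), namely
$$\llrr{\lnot\psi}\cup L(s)\backslash\Box_i\cup\bigcup_{d\in E_{L(s)}(i)}\lnot\big(g_s(i,d)[1]\cup\bigcup g_s(i,d)[2]\backslash h_s(i,\psi)(d)\big),$$
together with the propagated assignment $C(s')=h_s(i,\psi)$ (resp.\ $ha_s$, $hb_s$), really do enforce the semantic clause for $\nabla_i$. Concretely one must check: (i) in any $i$-successor $t$, every formula $\chi\in G_X(i,d)$ that lies \emph{outside} the selected block $h_s(i,\psi)(d)$ has $\lnot\chi\in L(t)$, so by IH $t\not\Vdash\chi$; hence the only $G_X(i,d)$-formulas that can be true at $t$ are those in the one selected block; (ii) the value $C(t)(d)$ equals the ``name'' of that block, consistently across all $i$-successors that picked the same block; so two $\psi$-successors with the same value of $d$ is exactly what $\nabla_i(\psi,d)\in L(s)$ demands when one further checks, via the state conditions on $g$ (that $A\cup\bigcup B=G_X(i,d)$ partitions it and $\psi\in G_X(i,d)$ whenever $\nabla_i(\psi,d)\in L(s)$), that $\psi$ itself sits in a block. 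This is the tableau analogue of clause (2) of Definition~\ref{def.canonicalmodel} and of the well-definedness argument in Proposition~\ref{prop.construction}, and it is where the ``array of canonical ways'' (the enumeration of partitions encoded by $g$) does its work; getting the indexing and the $\backslash h_s(i,\psi)(d)$ set-difference bookkeeping exactly right, and handling the placeholder symbols $\bullet,\circ$ for the $d_0$ that may not lie in $E_{L(s)}(i)$, is the delicate part. Once the Truth Lemma is established, the lemma follows: $\phi_0\in L(s_0)$ by construction, so $\M,s_0\Vdash\phi_0$, hence $\phi_0$ is satisfiable.
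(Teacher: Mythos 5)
Your proposal is correct and follows essentially the same route as the paper: build the model from the satisfiable state-nodes, with $\rel{i}$ given by an $i$-labeled edge followed by a chain of unlabeled edges, $V$ read off the labels and $V_\D$ read off $C$, and then prove a truth lemma by induction whose only substantive cases are $\nabla_i(\psi,d)$ and $\lnot\nabla_i(\psi,d)$, handled exactly as you describe via the blocks of $g_s(i,d)[2]$ and the propagation of $C$ along rule (d). The only cosmetic differences are that the paper takes $W$ to be \emph{all} satisfiable state-nodes rather than a selected successful subtree, and keeps the block-names and $\bullet,\circ$ themselves as the objects in $O$ instead of re-coding them as integers.
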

    \begin{proof}
        Suppose the root is marked ``satisfiable''. Then we can build a model satisfying $\phi_0$ from the constructed tree. Let $\M = \lr{W,O,\llrr{\rel{i}\mid i = 1 .. n},V,V_\D}$ where:
        \begin{itemize}
            \item $W = \llrr{s\mid s\ \text{is marked ``satisfiable'' and}\ \lr{L(s),F(s)}\ \text{is a state}}$;
            \item $O = $ all finite subset of \LKvr plus $\bullet$ and $\circ$;
            \item $s\rel{i}t$ if and only if there exists $s'\in W$ such that $s'$ is an $i$-successor of $s$ and $t$ is reachable from $s$ through a sequence of unlabeled edges;
            \item for all $s\in W$, if $p\in L(s)$ then $s\in V(p)$, if $\lnot p\in L(s)$ then $s\not\in V(p)$;
            \item for all $s\in W$, if $C(s)(d)$ is defined, then $V_\D(d,s) = C(s)(d)$.
        \end{itemize}
        By our construction method, there must be such a model. Now we can prove that if $\phi\in L(s)$ then $\M,s\Vdash\phi$ by a induction on $Sub^+(\phi_0)$. We give the key step of that induction:
        \begin{itemize}
            \item if $\nabla_i(\phi,d)\in L(s)$, then $d \in E_{L(s)}(i)$ and $\phi\in G_{L(s)}(i,d)$. Since $\lr{L(s), F(s)}$ is a state, $g_s(i,d)$ satisfies the clauses in the definition of state. Particularly, $\phi\in G_{L(s)}(i,d) = g_s(i,d)[1]\cup\bigcup g_s(i,d)[2]$.  Consider following two cases:
                \begin{itemize}
                    \item if $\phi\in g_s(i,d)[1]$, then by restraints on $g_s$ and $ha_s$ and rule (d), it is immediate that for all $i$-successors of $s$ $s'$, $\lnot\phi\in L(s')$. Thus for all $s''$ reachable from $s'$ through a sequence of unlabeled edges, $\lnot\phi\in L(s'')$. So if $s\rel{i}t$, $\lnot\phi\in L(t)$. By induction hypothesis, $\M,t\not\Vdash \phi$. Thus $\nabla_i(\phi,d)$ is trivially true on $s$.
                    \item if $\phi\in \bigcup g_s(i,d)[2]$, then there is a unique $X\in g_s(i,d)[2]$ such that $\phi\in X$. Now for any $t$ such that $s\rel{i}t$, by the property of $\rel{i}$, there exists $s'$ such that $s'$ is an $i$-successor of $s$ and $t$ is reachable from $s'$ through a sequence of unlabeled edges. By rule (d), $s'$ must be generated by a formula of the form $\lnot\Box_i\psi$ or $\lnot\nabla_i(\psi,d_0)$. W.l.o.g we suppose it is generated by $\lnot\nabla_i(\psi,d_0)$ and $ha_s$. If $\phi\in L(s')$, then $\lnot\phi\not\in L(s')$, because $s'$ must be marked ``satisfiable'' and thus is not blatantly inconsistent. Again by rule (d), $\phi\in ha_s(i,\psi,d_0)(d)$ for if not so, $\phi$ will in $\bigcup g_s(i,d)[2]\backslash ha_s(i,\psi,d_0)(d)$, then $\lnot\phi$ will be in $L(s')$, contradiction. By the constraints on $ha_s$, $ha_s(i,\psi,d_0)(d)$ must be $X$. Then by rule (d) again, $C(s')(d) = X$ and thus $C(t)(d) = X$. With this frame of argument, we can conclude that for all $t$ such that $s\rel{i}t$, if $\phi\in L(t)$ then $C(t)(d) = X$. By induction hypothesis ($\M,t\Vdash\phi$ implies $\phi\in L(t)$) and restraint on $V_\D$, we can conclude that $\M,s\Vdash\nabla_i(\phi,d)$.
                \end{itemize}
            \item if $\lnot\nabla_i(\phi,d)\in L(s)$, then it is immediate from rule (d) that there are two $i$-successor $s_a$ and $s_b$ such that $\phi\in L(s_a), \phi\in L(s_b), C(s_a)\not=C(s_b)$. Since $s$ is marked ``satisfiable'', $s_a$ and $s_b$ must also be so. By rule (e) and the finiteness of this tree, there must be $t_a$, $t_b$ in $W$ and reachable through a sequence of unlabeled edges from $s_a$ and $s_b$ respectively. Then $\phi\in L(t_a)$ and $\phi\in L(t_b)$ and $C(t_a)(d) = C(s_a)(d) \not= C(s_b)(d) = C(t_b)(d)$. By induction hypothesis and $\M$'s properties, $s\rel{i}t_a$, $s\rel{i}t_b$, $\M,t_a\Vdash\phi$, $\M,t_b\Vdash\phi$, $V_\D(d,t_a)\not=V_\D(d,t_b)$. So $\M,s\Vdash\lnot\nabla_i(\phi,d)$.
            \item If $\M,s\Vdash\nabla_i(\phi,d)$ then $\nabla_i(\phi,d)\in L(s)$. For suppose not, then $\lnot\nabla_i(\phi,d)\in L(s)$, then $\M,s\Vdash\lnot\nabla_i(\phi,d)$, contradiction. Similar results goes for $\lnot\nabla_i(\phi,d)$.
        \end{itemize}
        Since the root is marked ``satisfiable'', there must be a $s$ reachable through unlabeled edges from $s_0$ such that $s\in W$. Then $\phi_0\in L(s)$ and then $\M,s\Vdash\phi_0$, so $\phi_0$ is satisfiable.
    \end{proof}

    \begin{lemma}
        If $\phi_0$ is satisfiable, then after the construction for $\phi_0$, root $s_0$ will be marked ``satisfiable''.
    \end{lemma}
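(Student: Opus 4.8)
The plan is to run the argument dual to the previous lemma: from a pointed model witnessing the satisfiability of $\phi_0$ I will trace a ``good'' subtree of the constructed tree and read off the label ``satisfiable'' at the root. Concretely, I would first fix $\M$ and $w$ with $\M,w\Vdash\phi_0$, and introduce a notion of a node $s$ being \emph{realizable}: there is a pointed model $\M',w'$ together with an assignment $\iota$ of the objects and placeholders occurring in $\mathrm{ran}(C(s))$ into the object domain of $\M'$ such that (i) $\M',w'\Vdash\psi$ for every $\psi\in L(s)$; (ii) $V_{\D}'(d,w')=\iota(C(s)(d))$ for every $d\in\mathrm{dom}(C(s))$, with the distinct placeholders $\bullet,\circ$ mapped to distinct objects; and (iii) whenever $\langle L(s),F(s)\rangle$ is already a state, the data $F(s)$ is \emph{faithful} to $\M',w'$, meaning $g_s(i,d)[1]$ is exactly $\{\phi\in G_{L(s)}(i,d)\mid \M',w'\Vdash\Box_i\lnot\phi\}$, $g_s(i,d)[2]$ is exactly the partition of the remaining formulas induced by the agreement of $d$ over $\phi$-worlds (the classes of $\sim_{i,d}$ computed in $\M'$), and $h_s,ha_s,hb_s$ point, for each $\lnot\Box_i\phi$ resp.\ $\lnot\nabla_i(\phi,d_0)$ in $L(s)$, to the blocks realized by a chosen family of $i$-successors of $w'$ witnessing those formulas.

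Next I would show (a) the root is realizable, taking $\M'=\M$, $w'=w$, $\iota$ empty; and (b) realizability propagates downward through every tableau rule, in the sense that a realizable non-leaf node has at least one realizable successor along each unlabeled expansion it is forced to make, and \emph{all} of its labeled successors are realizable. For rules (a)--(c) of step 2 this is the routine observation that a satisfiable (possibly to-be-fully-expanded) set is decided one way or the other by the model, so one of the offered children inherits the witness. For the ``forming state'' rule one must exhibit a concrete $F'$ with $\langle L(s),F'\rangle$ a state \emph{and} faithful to $\M',w'$: take $g'(i,d)$ to be the true partition of $G_{L(s)}(i,d)$ in $\M'$ (formulas with no $i$-successor witness go into the first component, the rest are grouped by their uniform value of $d$), and for each $\lnot\Box_i\phi$ resp.\ $\lnot\nabla_i(\phi,d_0)$ pick witnessing successor world(s) of $w'$ and let $h',ha',hb'$ return the block each such world lands in (the empty block if it satisfies no formula of $G_{L(s)}(i,d)$); the clause ``$ha'(i,\phi,d_0)(d_0)\neq hb'(i,\phi,d_0)(d_0)$ or both empty'' is met because the two chosen witnesses for $\lnot\nabla_i(\phi,d_0)$ disagree on $d_0$, and in the borderline situation where they would fall in the same block one uses the $\bullet/\circ$ escape recorded by $C$. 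For the ``add labeled successors'' rule, each labeled child $s'$ is realized by the corresponding witnessing successor $v$ of $w'$: $v$ satisfies $\lnot\phi$ (or $\phi$), satisfies every unboxed formula in $L(s)\backslash\Box_i$, and falsifies every formula in $g_s(i,d)[1]$ and in every block of $g_s(i,d)[2]$ other than the one $C(s')$ assigns, precisely because $F(s)$ is faithful; updating $\iota$ along the $d$-values realized by $v$ gives the assignment required for $C(s')$.

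Finally, since a realizable leaf cannot be blatantly inconsistent (no satisfiable set contains both $\phi$ and $\lnot\phi$) and the tree is finite, a bottom-up induction on the height of the subtree rooted at $s$, using the marking rules (unlabeled: some child satisfiable; labeled: all children satisfiable; leaf: not blatantly inconsistent), shows that every realizable node is marked ``satisfiable''; in particular the realizable root is, which is what we want. The step I expect to be the main obstacle is the ``forming state'' case: one has to check that the state definition is exactly permissive enough to record the equivalence classes and the witness-blocks induced by an arbitrary model — in particular that the constraints linking $g[2]$, the functions $h,ha,hb$, and the value records $C$ together with the $\bullet/\circ$ placeholders are simultaneously satisfiable — and to set up the bookkeeping $\iota$ so that it survives the resetting of $F$ and $C$ at each modal step. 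The remaining cases are bookkeeping parallel to the standard tableau completeness argument for \textbf{K}.
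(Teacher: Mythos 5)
Your proposal is correct and follows essentially the same route as the paper: a leaves-to-root induction in which satisfiability (your ``realizability'') is propagated through the unlabeled rules, a faithful $F'$ is extracted from the witnessing model at the state-forming step (the true partition of $G_{L(s)}(i,d)$ by the value of $d$ over witnessed $\phi$-worlds, with $h,ha,hb$ pointing to the blocks of chosen witness successors), and the labeled children are then realized by those very successors. The only notable divergence is cosmetic: you handle the propositional branching rules by directly observing that the model decides each subformula, whereas the paper detours through its completeness and soundness theorems; and your explicit invariant tracking $C$, $\iota$, and the $\bullet/\circ$ placeholders makes precise some bookkeeping the paper leaves implicit.
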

    \begin{proof}
        Through a induction from leaves to roots, we show that if $\lr{L(s),F(s)}$ is not a state and $L(s)$ is satisfiable, then $s$ is marked ``satisfiable''.

        First, if $s$ is a leaf, and $L(s)$ is satisfiable, then $L(s)$ must not be blatantly inconsistent. But since $s$ is a leaf, this suffices for $s$ to be marked ``satisfiable''.

        If $s$ is not a leaf and rule (a) or (b) was applied to $s$: w.l.o.g we show the case where (b) was applied to $s$, generating successor $s_1$ and $s_2$. Suppose both $L(s_1)$ and $L(s_2)$ are unsatisfiable, then by completeness theorem we have shown, they are inconsistent. So $\phi_{L(s)}\rightarrow\phi$ and $\phi_{L(s)}\rightarrow\lnot\phi$ are derivable. Thus $\phi_{L(s)}\rightarrow\bot$ is derivable, $L(s)$ is inconsistent. By soundness, $L(s)$ is unsatisfiable. Take a contraposition, we have if $L(s)$ is satisfiable, then either $L(s_1)$ or $L(s_2)$ is satisfiable. By induction hypothesis (note that $s_1$ and $s_2$ are not states), either $s_1$ or $s_2$ is marked ``satisfiable''. By rule (e), $s$ is marked ``satisfiable''.

        If $s$ is not a leaf and rule (c) was applied to $s$: suppose $L(s)$ is satisfiable, let $\M,s = \lr{W,O,\llrr{\rel{i}\mid i = 1,\ldots,n},V,V_\D},s$ be the model that satisfies $L(s)$. Now let $g$ be a function on $E_{L(s)}$ such that:
        \begin{itemize}
            \item $g(i,d)[1] = \llrr{\phi\in G_{L(s)}(i,d)\mid \text{for all}\ t\ \text{such that} s\rel{i}t, \M,t\not\Vdash\phi}$;
            \item $g(i,d)[2]$ is the partition of set $\llrr{\phi\in G_{L(s)}(i,d)\mid \text{there exists} t: s\rel{i}t, \M,t\Vdash\phi}$ defined by relation $\sim$ where $\psi_1\sim\psi_2$ if and only if there exists $t_1, t_2$ such that $s\rel{i}t_1$, $s\rel{i}t_2$, $\M,t_1\Vdash\psi$, $\M,t_2\Vdash\psi$, $V_\D(d,t_1) = V_\D(d,t_2)$. This $\sim$ relation is evidently a equivalence relation. Let $f(i,d,x)$ be the unique set $X\in g(i,d)[2]$ such that there exists $\psi\in X$ and $t\in W$ such that $\M,t\Vdash\psi$ and $V_\D(d,t) = x$. If there is no such a $X$ in $g(i,d)[2]$, let $f(i,d,x) = \emptyset$.
        \end{itemize}
        Then, let $h$ be a function on $\llrr{\lr{i,\psi}\mid\lnot\Box_i\psi\in L(s)}$. By supposition, $\M,s\Vdash\lnot\Box_i\psi$ for any $i,\psi$ in the domain of $h$. This means there exists $t\in W$ such that $s\rel{i}t$ and $\M,t\not\Vdash\psi$. Now let $h^*(i,\psi) = t$ and $h(i,\psi)$ be a function on $E_{L(s)}(i)$ such that $h(i,\psi)(d) = f(i,d,V_\D(d,t))$.
        Further, let $ha, hb$ be functions on $\llrr{\lr{i,\psi,d_0}\mid\lnot\nabla_i(\psi,d_0)\in L(s)}$. By supposition, $\M,s\Vdash\lnot\nabla_i(\psi,d_0)$ for any $\lr{i,\psi,d_0}$ in the domain of $ha$ and $hb$. This means there exists $t_a, t_b\in W$ such that both of them is accessible from $s$ through $i$, satisfies $\psi$ but $V_\D(d_0,t_a)\not=V_\D(d_0,t_b)$. Now for $\boldsymbol{x} = a, b$, let $h\boldsymbol{x}^*(i,\psi,d_0) = t_{\boldsymbol{x}}$ and let $h\boldsymbol{x}(i,\psi,d_0)$ be a function on $E_{L(s)}(i)\cup\llrr{d_0}$ such that $h\boldsymbol(i,\psi,d_0)(d) = f(i,d,V_\D(d,t_{\boldsymbol{x}}))$ for $d\in E_{L(s)}(i)$ and if $d_0\not\in E_{L(s)}(i)$, set $h\boldsymbol{x}(i,\psi,d_0)(d_0) = \emptyset$.

        Now it is evident that $\lr{L(s),g,h,ha,hb}$ is a state, so by rule (c), there will be a successor $s'$ of $s$ such that $F(s') = \lr{g,h,ha,hb}$. It is also not hard to see that after applying rule (d) to $s'$, for every successor $s''$ of $s'$, $L(s'')$ is satisfiable, because if $s''$ is generated by $h(i,\psi)$ (or $ha(i,\psi,d_0)$, $hb(i,\psi,d_0)$), then $\M,h^*(i,\psi)$ (or $ha^*(i,\psi,d_0), hb^*(i,\psi,d_0)$) $\Vdash L(s'')$.

        To see this more clearly, suppose $s''$ is generated by $ha(i,\psi,d_0)$ and let $t = ha^*(i,\psi,d_0)$. If $\alpha\in L(s'')$, then by rule (d), there are several cases:
        \begin{itemize}
            \item $\alpha = \psi$. By selection of $ha^*(i,\psi,d_0)$, this is evident;
            \item $\alpha \in L(s)\backslash\Box_i$. Since $s\rel{i}t$, this is also evident;
            \item for some $d \in E_{L(s)}(i)$, $\alpha\in \lnot g(i,d)[1]$. By definition of $g(i,d)[1]$, every $i$-accessible world from $s$ refutes every formula in $g(i,d)[1]$. So $\M,t\Vdash\alpha$.
            \item for some $d \in E_{L(s)}(i)$, $\alpha\in \lnot\big( \bigcup g(i,d)[2]\backslash ha(i,\psi,d_0)(d)\big)$. Let $\alpha = \lnot\beta$. Then $\beta$ does not belong to $ha(i,\psi,d_0)(d)$. Towards a contradiction suppose that $\M,t\Vdash\beta$, then by definition of $f$, $f(i,d,V_\D(d,t)) = [\beta]_{\sim}$. By definition of $ha$, $ha(i,\psi,d_0)(d) = [\beta]_{\sim}$, so $\beta\in ha(i,\psi, d_0)(d)$, contradiction. Thus, $\M,t\Vdash\lnot\beta$.
        \end{itemize}

        In conclusion, there is a successor $s'$ of $s$ such that for every successor $s''$ of $s'$, $L(s'')$ is satisfiable. By induction hypothesis, all such $s''$ is marked ``satisfiable''. By rule (e), $s'$ is marked ``satisfiable'', and so is $s$.
    \end{proof}

    It is straightforward to turn the above construction method into an algorithm running in polynomial space, using a depth-first search. For stepping down in the search tree, we need to record where we are currently by a stack where in every level a set of subformulas of $\phi_0$ is kept and the height of this stack is at most $|\phi_0|^2$. Thus we need $\mathcal{O}(|\phi_0|^2\times|\phi_0|)$ space. As the width of this tableau is exceedingly large, extra space is needed for branching. We need to enumerate all possible $F$ properly. At each level of the stack, we need to record where we are when enumerating $F$ so that the next $F$ can be calculated. This consumes $\mathcal{O}(|\phi_0|^2\times |\phi_0|^2)$ space. This means this algorithm runs in $\mathcal{O}(|\phi_0|^4)$ space, that is, in \PSPACE. Since this logic also contains modal logic \textbf{K}, its satisfiability problem is \PSPACE-hard. So we have theorem:
    \begin{theorem}
        The satisfiability problem for logic \LKvr is \textbf{PSPACE}-complete.
    \end{theorem}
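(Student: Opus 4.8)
The plan is to read the theorem off the machinery already in place: the tableau of this section is a correct decision procedure, it can be executed in polynomial space, and the logic contains multi-modal \textbf{K}. First I would combine the termination lemma with the two correctness lemmas above to conclude that, on input $\phi_0$, the tree construction halts and marks its root $s_0$ ``satisfiable'' exactly when $\phi_0$ is satisfiable; so the only remaining point for membership is the space bound. For that I would re-use the two structural facts from the proof of the termination lemma: any maximal chain of unlabeled edges has length $\le 2|\phi_0|+1$, and $depth(L(\cdot))$ strictly decreases across every labeled edge, so a root-to-leaf path carries at most $|\phi_0|$ labeled edges and the tableau tree has depth $O(|\phi_0|^2)$. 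I would then implement the construction as a depth-first traversal that keeps only one branch in memory at a time: the search stack has $O(|\phi_0|^2)$ frames, and each frame records the current $L(s)\subseteq Sub^+(\phi_0)$, the tuple $F(s)$ (functions on sets of size $\le|\phi_0|$, hence of polynomial bit-size), the partial assignment $C(s)$, and a cursor marking how far we have progressed through the children dictated by the applicable rule. For rule (c) this cursor ranges over the tuples $F'$ making $\langle L(s),F'\rangle$ a state — up to $|\phi_0|^{|\phi_0|^2+3|\phi_0|}$ of them, but each is generated and tested in polynomial space and only the current one is held — so each frame has size $O(|\phi_0|^2)$ and the whole run uses $O(|\phi_0|^4)$ space. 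Checking rule applicability, checking whether $\langle L(s),F(s)\rangle$ is a state, and applying the marking rule once all children are marked are all local, polynomial-time computations; this gives membership in \PSPACE.

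For the lower bound I would observe that the fragment of \LKvr\ built only from $\top$, proposition letters, $\lnot$, $\land$, and the $\ncs_i$ operators is literally the language of multi-modal \textbf{K}, and that on the class $\mclass{K}$ the truth conditions for such formulas never touch $O$ or $V_\D$, so they agree with the ordinary Kripke semantics of \textbf{K}. Hence the identity map is a (trivial, polynomial) reduction from \textbf{K}-satisfiability to \LKvr-satisfiability, and \PSPACE-hardness of \LKvr\ follows from Ladner's \PSPACE-hardness of \textbf{K}. Combined with the upper bound, this yields \PSPACE-completeness.

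The hard part is really just the space accounting for the upper bound — in particular the legitimate worry that the doubly-exponential branching width at the ``forming state'' step could blow up the space usage. The point to make carefully is that under a depth-first discipline only the \emph{depth} of the tableau, not its width, enters the space bill; so the crux is the $O(|\phi_0|^2)$ depth estimate (short unlabeled chains, strictly decreasing modal depth across labeled edges), after which everything reduces to the routine bookkeeping described above. Correctness and termination themselves are already settled by the preceding lemmas, so nothing new is needed there.
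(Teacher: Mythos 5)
Your proposal is correct and follows essentially the same route as the paper: correctness and termination are delegated to the preceding lemmas, membership in \PSPACE\ comes from a depth-first traversal whose stack depth is the $O(|\phi_0|^2)$ bound on the tree depth with a per-level cursor for enumerating the tuples $F'$ (giving $O(|\phi_0|^4)$ space), and hardness comes from the fact that \LKvr\ contains multi-modal \textbf{K}. Your write-up is somewhat more explicit than the paper's about why the huge branching width at the state-forming step does not affect the space bound and about the identity reduction from \textbf{K}, but the argument is the same.
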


\section{Conclusion}

In this paper,
we showed that $\SLKvr$ is sound and complete w.r.t. $\LKvr$ over arbitrary models
and gave a tableau for this logic.
This is just a start of the study of the complexity of similar ``knowing what'' logics.

Our proof of the completeness is relatively simpler than its counterpart in \cite{WF14}.
Exactly what makes this possible needs further investigation,
and we conjecture that, if this cause can be found,
we may give a beautiful frame of completeness proof upon which
proving completeness results on other special model classes will be easier.

Our tableau is not simple,
and more importantly,
unlike tableaux for normal modal logics
where if a formula is unsatisfiable,
a proof of its negation can be effectively constructed,
our tableau for \LKvr\ cannot provide this proof now.
This commands further study,
but our conjecture here is that,
a proof of the negation of an unsatisfiable formula is attainable from this tableau
or a slightly tweaked version,
even though it is not found yet.

The complexity of \ELKvr\ is what attracted us initially,
and our tableau may shed some light on it.
Yet it is still arguable whether it is in \PSPACE.
To make things more explicit,
we should try adding formulas
$d = x$ and $\Box_i(\phi\rightarrow d = x)$
directly into the tableau instead of using $G_i(\phi,d)$ and partitions,
which may only work on model class \mclass{K}.

Last but not least,
we should consider extending our language to incorporate more first-order characteristics,
such as predicate or equality.
If such extension does not bring too much complexity or other undesirable property,
we may also try to give a good logic on encryption,
as Cohen and Dam did in \cite{CoDa07}.

\bibliography{sgwyjj}

\end{document}